\documentclass[journal,nofonttune]{IEEEtran}

\usepackage{fontspec}
\usepackage{amsmath,amssymb,amsthm}
\usepackage{graphicx}
\usepackage{xcolor}
\usepackage{booktabs}
\usepackage{url}
\usepackage{braket}
\usepackage[numbers,sort&compress]{natbib}
\usepackage[margin=15mm]{geometry}
\usepackage{hyperref}
\usepackage{mathtools}
\usepackage[caption=false,font=footnotesize]{subfig}
\usepackage{algorithm}
\usepackage{algpseudocode}
\usepackage[most]{tcolorbox}
\usepackage{aliascnt}

\usepackage{todonotes}

\newcommand{\F}{\mathbb{F}}
\newcommand{\Fb}{\mathbb{F}_2}
\newcommand{\Fq}{\mathbb{F}_q}
\newcommand{\FQ}{\mathbb{F}_Q}

\newcommand{\wt}{\operatorname{wt}}
\newcommand{\pwc}{p_{\mathrm{wc}}}

\newcommand{\cB}{\mathcal{B}}
\newcommand{\cC}{\mathcal{C}}
\newcommand{\conv}{\operatorname{conv}}

\newcommand{\cin}{C^{\mathrm{I}}}
\newcommand{\nin}{n^{\mathrm{I}}}
\newcommand{\kin}{k^{\mathrm{I}}}
\newcommand{\din}{d^{\mathrm{I}}}
\newcommand{\rin}{R^{\mathrm{I}}}

\newcommand{\deltain}{\delta^{\mathrm{I}}}
\newcommand{\qin}{q}

\newcommand{\cout}{C^{\mathrm{O}}}
\newcommand{\nout}{n^{\mathrm{O}}}
\newcommand{\kout}{k^{\mathrm{O}}}
\newcommand{\dout}{d^{\mathrm{O}}}
\newcommand{\rout}{R^{\mathrm{O}}}
\newcommand{\routgs}{R^{\mathrm{O}}_{\mathrm{GS}}}
\newcommand{\deltaout}{\delta^{\mathrm{O}}}
\newcommand{\qout}{Q}
\newcommand{\doutlb}{\dout_{\mathrm{lb}}}

\newcommand{\ncon}{\nin\nout}
\newcommand{\kcon}{\kin\kout}
\newcommand{\dcon}{\din\dout}
\newcommand{\dconlb}{d^{\mathrm{C}}_{\mathrm{lb}}}
\newcommand{\deltacon}{\delta^{\mathrm{C}}}

\newcommand{\qcon}{\qin}

\newcommand{\ntarget}{n^*}
\newcommand{\ktarget}{k^*}

\newcommand{\nrac}{\nu}
\newcommand{\krac}{\kappa}
\newcommand{\rrac}{R_{\mathrm{RAC}}}

\newcommand{\rbase}{R_{\mathrm{B}}}

\newcommand{\np}{n'}
\newcommand{\kp}{k'}

\newtheorem{theorem}{Theorem}

\newaliascnt{corollary}{theorem}
\newtheorem{corollary}[corollary]{Corollary}
\aliascntresetthe{corollary}

\newaliascnt{lemma}{theorem}
\newtheorem{lemma}[lemma]{Lemma}
\aliascntresetthe{lemma}

\newaliascnt{conjecture}{theorem}
\newtheorem{conjecture}[conjecture]{Conjecture}
\aliascntresetthe{conjecture}

\newaliascnt{proposition}{theorem}
\newtheorem{proposition}[proposition]{Proposition}
\aliascntresetthe{proposition}

\theoremstyle{definition}

\newaliascnt{definition}{theorem}
\newtheorem{definition}[definition]{Definition}
\aliascntresetthe{definition}

\theoremstyle{definition}

\newaliascnt{remark}{theorem}
\newtheorem{remark}[remark]{Remark}
\aliascntresetthe{remark}

\usepackage[createShortEnv]{proof-at-the-end}
\newEndThm{remarkE}{remark}

\tcbset{
  apxthmstyle/.style={
    enhanced,
    breakable,
    colback=black!2,
    colframe=black!45,
    boxrule=0.4pt,
    arc=1pt,
    left=6pt,
    right=6pt,
    top=4pt,
    bottom=4pt,
    before skip=8pt,
    after skip=8pt,
  },
  apxproofstyle/.style={
    enhanced,
    breakable,
    colback=white,
    colframe=black!30,
    boxrule=0.4pt,
    arc=1pt,
    left=6pt,
    right=6pt,
    top=4pt,
    bottom=4pt,
    before skip=6pt,
    after skip=8pt,
  }
}

\let\oldtheorem\theorem
\let\endoldtheorem\endtheorem
\RenewDocumentEnvironment{theorem}{o}
  {\begin{tcolorbox}[apxthmstyle]%
   \IfNoValueTF{#1}{\oldtheorem}{\oldtheorem[#1]}}
  {\endoldtheorem\end{tcolorbox}}

\let\oldlemma\lemma
\let\endoldlemma\endlemma
\RenewDocumentEnvironment{lemma}{o}
  {\begin{tcolorbox}[apxthmstyle]%
   \IfNoValueTF{#1}{\oldlemma}{\oldlemma[#1]}}
  {\endoldlemma\end{tcolorbox}}

\let\oldcorollary\corollary
\let\endoldcorollary\endcorollary
\RenewDocumentEnvironment{corollary}{o}
  {\begin{tcolorbox}[apxthmstyle]%
   \IfNoValueTF{#1}{\oldcorollary}{\oldcorollary[#1]}}
  {\endoldcorollary\end{tcolorbox}}

\let\oldproposition\proposition
\let\endoldproposition\endproposition
\RenewDocumentEnvironment{proposition}{o}
  {\begin{tcolorbox}[apxthmstyle]%
   \IfNoValueTF{#1}{\oldproposition}{\oldproposition[#1]}}
  {\endoldproposition\end{tcolorbox}}

\let\olddefinition\definition
\let\endolddefinition\enddefinition
\RenewDocumentEnvironment{definition}{o}
  {\begin{tcolorbox}[apxthmstyle]%
   \IfNoValueTF{#1}{\olddefinition}{\olddefinition[#1]}}
  {\endolddefinition\end{tcolorbox}}

\let\oldremark\remark
\let\endoldremark\endremark
\RenewDocumentEnvironment{remark}{o}
  {\begin{tcolorbox}[apxthmstyle]%
   \IfNoValueTF{#1}{\oldremark}{\oldremark[#1]}}
  {\endoldremark\end{tcolorbox}}

\let\oldconjecture\conjecture
\let\endoldconjecture\endconjecture
\RenewDocumentEnvironment{conjecture}{o}
  {\begin{tcolorbox}[apxthmstyle]%
   \IfNoValueTF{#1}{\oldconjecture}{\oldconjecture[#1]}}
  {\endoldconjecture\end{tcolorbox}}
  
\title{Polynomial-Time Constructions of Asymptotically Good Random Access Codes from Linear Codes}
\author{
Ruho Kondo
and
Yota Maeda
\thanks{Ruho Kondo is with Toyota Central R\&D Labs., Inc., 41-1 Yokomichi, Nagakute, Aichi 480-1192, Japan (e-mail: r-kondo@mosk.tytlabs.co.jp).}
\thanks{Yota Maeda is with Toyota Central R\&D Labs., Inc., 1-4-14 Koraku, Bunkyo-ku, Tokyo 112-0004, Japan.}
}

\begin{document}
\maketitle

\begin{abstract}
Random access codes (RACs) encode a $q$-ary input string into fewer $q$-ary symbols while allowing any requested input symbol to be recovered with probability above the random-guessing value $1/q$.
However, it remains unclear what asymptotic guarantees on compression and recovery probability should be required of RACs.
We address this question by introducing the notion of \emph{asymptotically good RACs} and give two polynomial-time constructions.
First, an $[n,k,d]_q$ linear code with $d\ge2$, together with an encoder based on its orthogonal-array property, yields an $(n,n-k)_q$-RAC with worst-case decoding success probability
\[
\pwc=\frac{1}{q}+\left(1-\frac{1}{q}\right)\frac{d-2}{n}.
\]
These \emph{OA-RACs} are asymptotically good whenever the underlying linear codes are, and Garcia--Stichtenoth algebraic-geometry codes give a concrete polynomial-time realization.
Second, a direct-sum strategy using OA-RACs derived from $q$-ary Hamming codes and the trivial $(1,1)_q$-RAC yields asymptotically good RAC families at every prescribed RAC rate $\rrac=(n-k)/n\in(0,1)$.
Numerical results demonstrate favorable finite-length performance of both constructions.
\end{abstract}

\begin{IEEEkeywords}
random access codes, linear codes, dual distance, orthogonal arrays, algebraic geometry codes, concatenated codes
\end{IEEEkeywords}

\section{Introduction}
\label{sec:introduction}

A random access code (RAC) compresses an input string into a shorter message while allowing any designated input symbol to be decoded with probability strictly above random guessing.
For binary classical RACs, Ambainis et al.~\cite{ambainis1999dense} established an asymptotically tight tradeoff between the message rate and the decoding success probability, and the corresponding entropy converse extends to quantum messages~\cite{nayak1999optimal}.
Their achievability result, however, is nonconstructive, relying on existence arguments for suitable codes and a probabilistic encoder.
This motivates the study of explicit RAC constructions, particularly those admitting efficient encoding and decoding algorithms for practical implementations.

Explicit classical and quantum RACs have been investigated for various finite parameters and particular parameter families, with results on their constructions, decoding performance, and optimality~\cite{liabotro2017improved,ambainis2024quantum,tavakoli2015quantum,imamichi2018constructions,teramoto2023quantum,suzuki2026analytical,akibue2026measurement,kondo2026random,tan2026optimal}.
Recent work has also provided a geometric framework for explicit classical RAC constructions with general finite parameters~\cite{kondo2026random}, but no polynomial-time algorithm is known for the general construction.
Although direct sums of fixed-size RACs trivially yield polynomial-time constructible families with nonvanishing compression and worst-case decoding advantage, developing efficient explicit constructions that accommodate general parameter choices while maintaining these properties remains a challenge.

In this work, we introduce \emph{asymptotically good RACs}, whose
compression fractions and worst-case decoding advantages over random
guessing both remain bounded away from zero.
By the Ambainis--Nayak--Ta-Shma--Vazirani (ANTV) converse, these
conditions also keep the message rate bounded away from zero and the
worst-case decoding success probability bounded away from one.

We give two approaches to constructing such families.
First, building on the geometric connection between RACs and classical
codes~\cite{kondo2026random}, we introduce an OA encoder based on the
orthogonal-array property of linear codes.
Using dual codes as RAC codebooks, we show that any asymptotically good
family of linear codes yields an asymptotically good family of
\emph{OA-RACs}.
When the underlying codes and required operations are efficient, the
resulting OA-RACs are also efficient.
As a concrete polynomial-time realization, we use concatenated codes
with Garcia--Stichtenoth algebraic geometry outer codes
~\cite{garcia1995tower}.

Second, we consider a \emph{direct-sum strategy} that constructs large
RACs from smaller ones.
Using OA-RACs derived from $q$-ary Hamming codes together with the
trivial $(1,1)_q$-RAC, we show that this strategy constructs
asymptotically good RAC families at any prescribed rate.

Finally, we numerically examine the finite-length performance of both
constructions.
OA-RACs substantially outperform the Improved RAC construction
~\cite{liabotro2017improved} over much of the examined range, although
they do not generally attain the optimum.
For the direct-sum strategy, incorporating OA-RACs derived from
additional linear-code families, including algebraic geometry codes,
further improves the worst-case decoding success probability over a
wide range of RAC rates.
\section{Preliminaries}
\label{sec:preliminaries}
Our construction builds on the connection between RACs and linear codes established in~\cite{kondo2026random}.
We therefore first review RACs and the required facts on linear codes.
Since our explicit construction of asymptotically good RACs uses asymptotically good concatenated codes with algebraic geometry (AG) outer codes, we also review the necessary background on AG codes and concatenated codes.
Throughout the paper, for a positive integer $\ell$, we write $[\ell]:=\{0,1,\ldots,\ell-1\}$.
\subsection{Random Access Codes}
\label{subsec:rac-model}
An RAC with input length $\nrac$, message length $\krac$, and alphabet $\Fq$ is called an $(\nrac,\krac)_q$-RAC.
The encoder is a conditional distribution $P_{\mathrm{E}}(M\mid B)$ on messages $m\in\Fq^{\krac}$ given an input string $b\in\Fq^{\nrac}$.
For a requested coordinate $i\in[\nrac]$, the decoder is a conditional distribution $P_{\mathrm{D}}(\hat B_i\mid M)$ on the reconstructed symbol $\hat b_i\in\Fq$ given the message.
The worst-case decoding success probability is
\begin{align}
    \label{eq:rac-definition}
    \pwc(P_{\mathrm{E}},P_{\mathrm{D}})
    &:=\min_{b\in\Fq^{\nrac}}\min_{i\in[\nrac]}
    \sum_{m\in\Fq^{\krac}}
    P_{\mathrm{D}}(\hat B_i=b_i\mid M=m)\notag\\
    &\qquad\qquad\times P_{\mathrm{E}}(M=m\mid B=b).
\end{align}
The random-guessing probability is $1/q$.

For $q=2$, the maximum worst-case decoding success probability admits the geometric formulation~\cite{kondo2026random}
\begin{equation}
    \label{eq:worst-alt}
    \max_{P_{\mathrm{E}},P_{\mathrm{D}}}\pwc(P_{\mathrm{E}},P_{\mathrm{D}})
    =
    1-\min_{\substack{S\subset [0,1]^{\nrac}},\\|S|=2^{\krac}}
    \max_{b\in\Fb^{\nrac}}\min_{x\in\conv S} d_\infty(b,x),
\end{equation}
where $d_\infty(a,b):=\max_i|a_i-b_i|$ is the Chebyshev distance.
Thus, RAC construction can be viewed as a codebook-design problem over all subsets $S\subset[0,1]^{\nrac}$ of cardinality $2^{\krac}$.
Importantly, this search space contains the family of all ${\krac}$-dimensional binary linear codes $C\le\Fb^{\nrac}$, each of which also has exactly $2^{\krac}$ elements.
The present work exploits this inclusion by restricting $S$ to linear codebooks and then importing structural results from classical coding theory.

\subsection{Linear Codes}
\label{subsec:linear-codes}
For a prime power $q$, a subspace $C\le\Fq^n$ with $\dim C=k$ is called an $[n,k]_q$ linear code.
A matrix $G$ satisfying
\begin{equation}
    \label{eq:generator}
    C=\{uG:u\in\Fq^k\}
\end{equation}
is called a generator matrix of $C$.
The minimum distance is
\begin{equation}
    \label{eq:distance}
    d(C):=\min_{x\ne y\in C}d_{\rm H}(x,y)
    =\min_{0\ne c\in C}\wt(c).
\end{equation}
An $[n,k]_q$ code with minimum distance $d$ is denoted by $[n,k,d]_q$.
The dual code is
\begin{equation}
    \label{eq:dual}
    C^\perp=\{z\in\Fq^n:\langle z,c\rangle=0\ \text{for all }c\in C\},
\end{equation}
and $\dim C^\perp=n-k$.
We write $d^\perp:=d(C^\perp)$ when the underlying code is clear.
The rate and relative minimum distance are
\begin{equation}
    R:=\frac{k}{n},
    \qquad
    \delta:=\frac{d}{n}.
\end{equation}

Using these notations, asymptotically good linear codes are defined as follows:

\begin{definition}[Asymptotically good linear codes]
\label{def:asymptotically-good-code}
A sequence of $q$-ary linear codes $C_i=[n_i,k_i,d_i]_q$ with $n_i\to\infty$ is called asymptotically good if
\begin{equation}
    \liminf_{i\to\infty}\frac{k_i}{n_i}>0,
    \qquad
    \liminf_{i\to\infty}\frac{d_i}{n_i}>0.
\end{equation}
\end{definition}

\subsection{Algebraic Geometry Codes}
\label{subsec:ag-codes}
Let $X$ be a nonsingular projective irreducible curve of genus $g$ over $\FQ$.
For a divisor $G$, define the Riemann--Roch space
\begin{equation}
    \label{eq:riemann-roch-space}
    \mathcal L(G)=\{f\in\FQ(X)^\times:(f)+G\succeq0\}\cup\{0\}.
\end{equation}
Let $P_1,\ldots,P_n$ be distinct $\FQ$-rational points disjoint from the support of $G$, and set $D=P_1+\cdots+P_n$.
The corresponding evaluation code is
\begin{equation}
    \label{eq:ag-code}
    C_L(X,D,G)=\{(f(P_1),\ldots,f(P_n)):f\in\mathcal L(G)\}.
\end{equation}
If $\deg G<n$, the evaluation map is injective and
\begin{equation}
    K=\ell(G)\ge \deg G+1-g,
    \qquad
    D(C_L)\ge n-\deg G.
\end{equation}
Consequently,
\begin{equation}
    \label{eq:ag-bound}
    \delta\ge1-R-\frac{g-1}{n}.
\end{equation}
Thus, good asymptotic performance requires a large ratio of rational points to genus, $n/g$.

Let $n_Q(g)$ denote the maximum possible number of $\FQ$-rational points on a genus-$g$ curve over $\FQ$.
The Ihara constant is
\begin{equation}
    A(Q):=\limsup_{g\to\infty}\frac{n_Q(g)}{g}.
\end{equation}
The Drinfeld--Vl\u{a}du\c{t} bound gives $A(Q)\le\sqrt Q-1$, and equality is attained for square $Q$ by GS towers~\cite{garcia1995tower}.
Let $Q=r^2$ with $r$ a prime power.
Using the level index $m\in\{0,1,2,\ldots\}$ adopted here, one may select
\begin{equation}
    \label{eq:gs-length}
    n_m=Q^{m/2}(Q-\sqrt Q)
\end{equation}
rational evaluation points, while the genus is
\begin{equation}
    \label{eq:genus}
    g_m=
    \begin{cases}
    (Q^{(m+1)/4}-1)^2, & m\text{ odd},\\[1mm]
    (Q^{m/4}-1)(Q^{(m+2)/4}-1), & m\text{ even}.
    \end{cases}
\end{equation}
For a one-point divisor $G_m=\mu_mP_\infty$ with $\mu_m<n_m$, the AG code satisfies
\begin{equation}
    k_m\ge\mu_m+1-g_m,
    \qquad
    d_m\ge n_m-\mu_m.
\end{equation}
For a desired dimension $k_m\le n_m-g_m$, choosing $\mu_m=k_m+g_m-1$ yields a code of dimension at least $k_m$; a $k_m$-dimensional subcode then preserves the distance guarantee
\begin{equation}
    \label{eq:gs-parameters}
    d_m\ge n_m-k_m+1-g_m.
\end{equation}
The method of Shum et al.~\cite{shum2001low} constructs generator matrices of these GS-tower AG codes in $\widetilde O(n_m^3)$ field operations for fixed $Q$.
Moreover,
\begin{equation}
    \lim_{m\to\infty}\frac{g_m-1}{n_m}=\frac{1}{\sqrt Q-1},
\end{equation}
so the corresponding asymptotic distance guarantee is the Tsfasman--Vl\u{a}du\c{t}--Zink (TVZ) bound~\cite{tsfasman1982modular}
\begin{equation}
    \label{eq:tvz}
    \delta\ge1-R-\frac{1}{\sqrt Q-1}.
\end{equation}

A Reed--Solomon (RS) code over $\mathbb{F}_Q$ is defined by evaluating polynomials of degree less than $k$ at $n$ distinct elements $a_1,\ldots,a_n\in\mathbb{F}_Q$:
\begin{equation}
    \operatorname{RS}_Q(n,k)
    =\left\{(f(a_1),\ldots,f(a_n)):f\in\mathbb{F}_Q[x],\ \deg f<k\right\}.
\end{equation}
For $1\le k\le n\le Q$, this yields an $[n,k,n-k+1]_Q$ linear code.
In particular, an RS code is an AG code on the projective line $\mathbb{P}^1_{\mathbb{F}_Q}$, which has genus zero, with the divisor $G=(k-1)P_\infty$.
\subsection{Concatenated Codes}
\label{subsec:concat-codes}
Let the outer and inner linear codes have parameters $[\nout,\kout,\dout]_{\qout}$ and $[\nin,\kin,\din]_{\qin}$, respectively.
To map one outer symbol to one inner information vector, we impose
\begin{equation}
    \qout=\qin^{\kin}
\end{equation}
and fix an $\Fq$-linear vector-space isomorphism $\FQ\simeq\Fq^{\kin}$.
The concatenated code then has parameters
\begin{equation}
    [\ncon,\kcon,\ge\dcon]_{\qcon}.
\end{equation}

Let $[\nout_m,\kout_m,\dout_m]_\qout$ and $g_m$ be the parameters and genus of level-$m$ GS outer code.
For a fixed inner code concatenated with the level-$m$ GS outer code, the relative minimum distance satisfies
\begin{equation}
    \label{eq:gs-concat-distance}
    \deltacon_{m}
    \ge \deltain
    \left(1-\rout_{m}-\frac{g_m-1}{\nout_m}\right),
\end{equation}
where $\deltacon_{m}$ is the relative minimum distance of the concatenated code with level-$m$ GS outer code, and $\deltain=\din/\nin$, $\rout_{m}=\kout_m/\nout_m$.
Hence, if $\routgs:=\limsup_{m\to\infty} \rout_m$,
\begin{equation}
    \label{eq:gs-concat-asymptotic}
    \liminf_{m\to\infty}\deltacon_m
    \ge \deltain
    \left(1-\routgs-\frac{1}{\sqrt \qout-1}\right).
\end{equation}
In particular, the outer length can grow by increasing the GS tower level while the inner code and $\qout$ remain fixed.

For an RS outer code,
\begin{equation}
    \label{eq:rs-concat-distance}
    \deltacon_{\mathrm{RS}}
    \ge \deltain\left(1-\rout+\frac{1}{\nout}\right).
\end{equation}
However, an ordinary RS code satisfies $\nout\le \qout=\qin^{\kin}$.
Thus, at fixed base alphabet $\qin$, increasing the RS outer length eventually forces the inner dimension $\kin$ to grow as well.
In the explicit finite-length construction considered in this paper, the resulting certified inner-code relative-distance guarantee decreases with this growth, and consequently the certified relative minimum-distance guarantee of the RS concatenated construction approaches zero for very large target lengths.
For this reason, RS concatenation is used below as a finite-length baseline rather than as the mechanism for our asymptotically good RAC family.
The GS tower avoids this obstruction at the level of the guarantee because $\nout_m\to\infty$ can be achieved with a fixed inner code and fixed $\qout$.

\section{Construction of Asymptotically Good RACs}
\label{sec:dual-distance}

\subsection{Linear RACs}
\label{subsec:linear-rac}
Even after restricting~\eqref{eq:worst-alt} to $S\subset\{0,1\}^{\nrac}$, an optimal RAC need not in general be realized by a linear code~\cite{kondo2026random}.
Nevertheless, the family of linear codebooks is a particularly useful structured subclass because it exposes the full machinery of linear coding theory.
Moreover, in the binary worst-case formulation, restricting $S$ to $\{0,1\}^{\nrac}$ is equivalent to restricting the decoder to be deterministic~\cite{kondo2026random}.
Thus, after further restricting the codebook to a linear code, the remaining design choices are the linear code itself and the stochastic encoder used to select a codeword.
The transmitted message is simply an index of the selected codeword.
The resulting protocol has the form
\begin{equation}
\begin{array}{ccccc@{}c}
b
& \xrightarrow{\,P_{\mathrm{E}}(M|B=b)\,}
& m
& \xmapsto{\,\,}
& \tilde b={}
& c_m
\\[4pt]
\mathrel{\rotatebox[origin=c]{-90}{$\in$}}
& &
\mathrel{\rotatebox[origin=c]{-90}{$\in$}}
& &
&
\mathrel{\rotatebox[origin=c]{-90}{$\in$}}
\\[4pt]
\mathbb F_2^{\nrac}
& &
\mathbb F_2^{\krac}
& &
& C.
\end{array}
\end{equation}
To decode coordinate $i$, the receiver outputs $c_{m,i}$.

The discussion above is specific to the binary case.
We extend this framework to the $q$-ary setting and define a linear RAC as follows.
\begin{definition}[Linear RAC]
\label{def:linear-rac}
Fix a $\krac$-dimensional $q$-ary linear code $C\le\Fq^{\nrac}$ and a one-to-one indexing $m\in\Fq^{\krac}\leftrightarrow c_m\in C$.
Given $b\in\Fq^{\nrac}$, the encoder stochastically selects a codeword $c_m\in C$ and sends its index $m$.
For a requested coordinate $i$, the decoder outputs $c_{m,i}$.
We call this protocol a linear RAC.
\end{definition}

This restriction is the key bridge used in the rest of the paper: properties of linear codes can now be translated directly into guarantees for RACs.

\subsection{OA Encoder}
\label{subsec:oa-encoder}
An orthogonal array $\operatorname{OA}(M,n,v,\tau)$ is an $M\times n$ array over an alphabet of size $v$ such that, after selecting any $\tau$ columns, every possible $\tau$-tuple appears exactly $M/v^\tau$ times.
The integer $\tau$ is called the strength.
If the codewords of an $[n,k]_q$ linear code $C$ are arranged as the rows of an array, Delsarte's theorem~\cite{delsarte1973algebraic} gives
\begin{equation}
    \label{eq:delsarte-oa}
    C\text{ is an }\operatorname{OA}(q^k,n,q,\tau)
    \Longleftrightarrow
    d(C^\perp)\ge\tau+1.
\end{equation}
Using this property of linear codes, we construct the following RAC encoder:

\begin{theorem}[OA encoder]
\label{thm:oa-encoder}
Let $C=[n,k,d]_q$ be a linear code with dual distance $d^\perp\ge2$.
Choose any integer $s$ satisfying $0\le s\le d^\perp-2$.
Given $b\in\Fq^n$, choose a subset $J\subset[n]$ of size $s$ uniformly at random, then choose $c$ uniformly from
\begin{equation}
    \{c\in C:c_J=b_J\},
\end{equation}
and transmit the index of $c$.
For every $i\in[n]$ and every $b\in\Fq^n$, the decoding success probability of $(n,k)_q$-RAC is exactly
\begin{equation}
    \label{eq:oa-general}
    \frac{1}{q}+\left(1-\frac{1}{q}\right)\frac{s}{n}.
\end{equation}
In particular, choosing $s=d^\perp-2$ gives
\begin{equation}
    \label{eq:oa-finite}
    \frac{1}{q}+\left(1-\frac{1}{q}\right)\frac{d^\perp-2}{n}.
\end{equation}
\end{theorem}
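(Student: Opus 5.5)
The plan is to condition on the random subset $J$ and to use Delsarte's theorem~\eqref{eq:delsarte-oa} twice: once to show that the encoder is well defined, and once to compute the conditional distribution of the decoded symbol $c_i$.

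Since $d^\perp\ge s+2$, \eqref{eq:delsarte-oa} shows that $C$ is an $\operatorname{OA}(q^k,n,q,s+1)$. Hence it is also an orthogonal array of every smaller strength, in particular of strength $s$.

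\textbf{Step 1: the encoder is well defined.} Fix $J$ with $|J|=s$. Strength $s$ means the projection $c\mapsto c_J$ hits every pattern in $\Fq^{s}$ exactly $q^{k-s}$ times. So the set $\{c\in C:c_J=b_J\}$ is nonempty, with size $q^{k-s}$. (If $s=0$, take $J=\emptyset$ and $c$ uniform on $C$.)

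\textbf{Step 2: condition on whether $i\in J$.} Fix $b$ and $i$.
\begin{itemize}
\item If $i\in J$, every admissible $c$ has $c_i=b_i$, so decoding succeeds with probability $1$.
\item If $i\notin J$, set $J'=J\cup\{i\}$, which has $|J'|=s+1\le d^\perp-1$. By strength $s+1$, each pattern on $J'$ occurs exactly $q^{k-s-1}$ times among the codewords. So among the $q^{k-s}$ codewords agreeing with $b$ on $J$, exactly $q^{k-s-1}$ have $c_i=b_i$. Since $c$ is uniform on this set, the success probability is exactly $1/q$.
\end{itemize}
This step is the only place the hypothesis $s\le d^\perp-2$, rather than $s\le d^\perp-1$, is needed. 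It is the crux of the proof: we need strength $s+1$ on $J\cup\{i\}$, not merely strength $s$ on $J$.

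\textbf{Step 3: average over $J$.} A uniform $s$-subset of $[n]$ contains the fixed coordinate $i$ with probability $\binom{n-1}{s-1}/\binom{n}{s}=s/n$. Therefore the success probability is
\[
\frac{s}{n}\cdot 1+\left(1-\frac{s}{n}\right)\frac{1}{q}=\frac1q+\left(1-\frac1q\right)\frac{s}{n},
\]
which is \eqref{eq:oa-general}. This value is independent of $b$ and $i$, so it is exact for every pair and in particular equals $\pwc$.

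Substituting $s=d^\perp-2$ gives \eqref{eq:oa-finite}. Finally, the message is the index of a codeword of a $k$-dimensional code, so it lies in $\Fq^{k}$ and the protocol is an $(n,k)_q$-RAC.
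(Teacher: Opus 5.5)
Your proposal is correct and follows the paper's proof step for step. You use Delsarte's theorem to get strength at least $d^\perp-1\ge s+1$; this makes the conditional set nonempty and makes $c_i$ uniform when $i\notin J$ via strength on $J\cup\{i\}$, while $i\in J$ gives certain success, and averaging over $J$ with $\Pr[i\in J]=s/n$ yields the formula. Your explicit counts $q^{k-s}$ and $q^{k-s-1}$ and the binomial computation of $s/n$ just spell out details the paper leaves implicit.
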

\begin{proof}
By~\eqref{eq:delsarte-oa}, $C$ is an orthogonal array of strength at least $d^\perp-1$.
Because $|J|=s\le d^\perp-2$, every pattern on $J$ occurs equally often, so the conditional set $\{c\in C:c_J=b_J\}$ is nonempty and uniform sampling is well defined.
If $i\in J$, decoding is correct with probability $1$.
If $i\notin J$, then $|J\cup\{i\}|=s+1\le d^\perp-1$, so the projection onto $J\cup\{i\}$ is uniform; conditioned on $c_J=b_J$, the symbol $c_i$ is uniform on $\Fq$ and is correct with probability $1/q$.
Averaging over $J$ gives~\eqref{eq:oa-general}.
\end{proof}

We refer to an RAC incorporating this OA encoder as an \emph{OA-RAC}.
The theorem also allows a certified lower bound $d^\perp_{\rm lb}\le d^\perp$ and $d^\perp_{\rm lb}\ge2$ to be used in place of the exact dual distance as follows:

\begin{corollary}[Finite-length transfer from linear codes to RACs]
Let $\cB=[n,k,d]_q$ be a linear code, and suppose that a lower bound $d_{\mathrm{lb}}\le d$ with $d_{\mathrm{lb}}\ge 2$ is known.
Take $\cC:=\cB^\perp$ as the RAC codebook and apply the OA encoder of Theorem~\ref{thm:oa-encoder} with $s=d_{\mathrm{lb}}-2$.
Then the resulting $(n,n-k)_q$-RAC has worst-case decoding success probability
\begin{equation}
    \label{eq:col4}
    \frac{1}{q} + \left(1-\frac{1}{q}\right) \frac{d_{\mathrm{lb}}-2}{n}.
\end{equation}
\end{corollary}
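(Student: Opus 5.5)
The plan is to reduce the corollary directly to Theorem~\ref{thm:oa-encoder}, applied to the code $\cC=\cB^\perp$. First I will record the parameters of $\cC$. Since $\cB$ is an $[n,k]_q$ code, $\cC=\cB^\perp$ is an $[n,n-k]_q$ linear code, because $\dim\cB^\perp=n-k$. Its dual is $\cC^\perp=(\cB^\perp)^\perp=\cB$, using the standard fact that double duality holds for finite-dimensional subspaces under the nondegenerate standard bilinear form. Hence the dual distance of the codebook is $d(\cC^\perp)=d(\cB)=d$.

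Next I will check the hypotheses of Theorem~\ref{thm:oa-encoder} for $\cC$. The dual distance satisfies $d^\perp=d\ge d_{\mathrm{lb}}\ge 2$. The chosen parameter $s=d_{\mathrm{lb}}-2$ obeys $0\le s\le d-2=d^\perp-2$. Therefore the OA encoder is well defined on $\cC$. The resulting protocol is a linear RAC with input length $n$ and message length $\dim\cC=n-k$, that is, an $(n,n-k)_q$-RAC.

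Applying~\eqref{eq:oa-general}, I get that for every $b\in\Fq^n$ and every $i\in[n]$ the decoding success probability equals $\frac1q+(1-\frac1q)\frac{s}{n}$. Substituting $s=d_{\mathrm{lb}}-2$ gives exactly~\eqref{eq:col4}. Since this value does not depend on $b$ or $i$, the double minimum in~\eqref{eq:rac-definition} equals it. This yields the worst-case success probability.

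No step is a genuine obstacle. The only point needing care is the identification $(\cB^\perp)^\perp=\cB$. That identification is what allows the minimum distance of $\cB$, together with its certified lower bound, to play the role of the dual distance required by Theorem~\ref{thm:oa-encoder}.
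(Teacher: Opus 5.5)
Your proposal is correct and follows the paper's proof. Both arguments note that $\dim\cC=n-k$ and $d(\cC^\perp)=d(\cB)=d\ge d_{\mathrm{lb}}\ge 2$, check that $s=d_{\mathrm{lb}}-2$ satisfies the hypothesis of Theorem~\ref{thm:oa-encoder}, and read off the value from~\eqref{eq:oa-general}; your remarks on $(\cB^\perp)^\perp=\cB$ and on the probability being independent of $b$ and $i$ (so the double minimum equals it) only spell out what the paper leaves implicit.
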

\begin{proof}
Since $\cC=\cB^\perp$, $\dim \cC=n-k$ and $d(\cC^\perp)=d(\cB)=d\ge d_{\mathrm{lb}}$.
Hence $s=d_{\mathrm{lb}}-2$ satisfies the condition of Theorem~\ref{thm:oa-encoder}, and the claim follows directly from \eqref{eq:oa-general}. 
\end{proof}

This observation is important for polynomial-time construction because exact minimum-distance computation is intractable for general linear codes.

Beyond this computational advantage, the OA encoder also recovers RAC constructions that are optimal, either unconditionally or under the conjecture of~\cite{kondo2026random}, for certain parameter families.

\begin{corollary}[Known optimal binary RAC families]
\label{cor:known-optimal-families}
For $k\ge2$ and $n\ge2$, the OA encoder realizes both the worst-case-optimal $(2^k-1,k)_2$-RAC and the conditionally worst-case-optimal $(n,n-1)_2$-RAC identified in~\cite{kondo2026random}, where the latter optimality is conditional on the Conjecture 20 stated there.
\end{corollary}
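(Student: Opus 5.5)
We apply the previous Corollary (equivalently Theorem~\ref{thm:oa-encoder}) to two explicit binary codes and check that the resulting success probabilities match the worst-case-optimal values in~\cite{kondo2026random}. In both cases we take $\cB$ to be a code with known minimum distance $d$, use $\cC=\cB^\perp$ as the RAC codebook, set $s=d-2$, and evaluate \eqref{eq:col4} with $q=2$.

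\emph{The $(2^k-1,k)_2$ family.} Let $\cB$ be the binary Hamming code $[2^k-1,\,2^k-1-k,\,3]_2$, which exists for every $k\ge2$. Its dual $\cC$ is the $[2^k-1,k]_2$ simplex code, so the RAC has parameters $(2^k-1,k)_2$. Since $d(\cC^\perp)=3$, we take $s=1$. The OA encoder then picks one uniformly random coordinate $j$ and a uniformly random simplex codeword with $c_j=b_j$. This gives
\[
\pwc=\frac12+\frac12\cdot\frac{1}{2^k-1}.
\]
We then compare this with the optimal value for the $(2^k-1,k)_2$-RAC in~\cite{kondo2026random}. I expect that value to be $\tfrac12+\tfrac{1}{2(2^k-1)}$, realized there by the simplex (hypercube-vertex) codebook with a random-coordinate encoder. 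If the paper instead states it geometrically via \eqref{eq:worst-alt}, we translate: $\pwc=1-r$, where $r$ is the covering radius in $d_\infty$ of $\conv\cC$. We then check that $r=\tfrac12-\tfrac{1}{2(2^k-1)}$, using the uniform mixture over $\{c: c_j=b_j\}$ as the witness point $x$.

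\emph{The $(n,n-1)_2$ family.} Let $\cB$ be the repetition code $[n,1,n]_2$, whose dual $\cC$ is the even-weight code of dimension $n-1$. With $d=n$ we take $s=n-2$. The encoder copies $b$ on a random set of $n-2$ coordinates and fills the remaining two so that the total parity is even. This gives
\[
\pwc=\frac12+\frac12\cdot\frac{n-2}{n}=1-\frac1n.
\]
We then match this with the conditional optimum $1-1/n$ identified in~\cite{kondo2026random} under its Conjecture~20. The case $n=2$ gives $1/2$, which is consistent with the trivial bound.

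\emph{Main obstacle.} The constructions and the evaluation of \eqref{eq:col4} are mechanical. The real work is matching against the optimal values as~\cite{kondo2026random} states them, which may be phrased through the geometric formula \eqref{eq:worst-alt} rather than directly as $\pwc$. If so, I would compute the Chebyshev covering radius of $\conv\cC$ explicitly, using the OA encoder's conditional uniform mixture as the witness point. The proof is complete once that equality is verified for both families.
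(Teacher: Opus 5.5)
Your proposal is correct and follows the paper's proof: you use the simplex and single-parity-check codebooks, read off their dual distances $3$ (Hamming) and $n$ (repetition), set $s=1$ and $s=n-2$, and match $\frac12+\frac{1}{2(2^k-1)}$ and $1-\frac1n$ against the optimal values reported in~\cite{kondo2026random}. The geometric fallback via the Chebyshev covering radius is not needed. If you did use it, the witness point would have to be the encoder's mixture averaged over the random coordinate $j$ as well, because for a fixed $j$ that mixture lies at $d_\infty$-distance $\frac12$ from $b$.
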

\begin{proof}
The worst-case-optimal $(2^k-1,k)_2$-RAC of~\cite{kondo2026random} is realized by the binary simplex code $C_{\mathrm{simp}}$.
Its dual is the binary Hamming code, whose minimum distance is $3$.
Thus $d^\perp(C_{\mathrm{simp}})=3$, and the OA encoder uses $s=1$.
With $n=2^k-1$, Theorem~\ref{thm:oa-encoder} gives
\begin{equation}
    p_{\mathrm{wc}}
    =\frac{1}{2}+\frac{1}{2n},
\end{equation}
which is exactly the worst-case optimum reported in~\cite{kondo2026random}.

For the $(n,n-1)_2$ family, the relevant codebook is the binary single-parity-check code $C_{\mathrm{SPC}}$.
Its dual is the binary repetition code, whose minimum distance is $n$.
Hence $d^\perp(C_{\mathrm{SPC}})=n$, so the OA encoder uses $s=n-2$ and achieves
\begin{equation}
    p_{\mathrm{wc}}
    =\frac{1}{2}+\frac{1}{2}\frac{n-2}{n}
    =1-\frac{1}{n}.
\end{equation}
This is the success probability of the conditionally worst-case-optimal $(n,n-1)_2$-RAC in~\cite{kondo2026random}.
\end{proof}

\subsection{A Concrete Example of the OA Encoder}
\label{subsec:oa-example}
As a concrete binary example, consider a $[10,7]_2$ linear code $C$ whose dual code $C^\perp$ is a $[10,3]_2$ code with generator matrix
\begin{equation}
    G^\perp=H=
    \begin{bmatrix}
    1&1&0&0&0&0&1&1&0&1\\
    0&0&1&1&0&0&1&0&1&1\\
    0&0&0&0&1&1&0&1&1&1
    \end{bmatrix}.
\end{equation}
Equivalently, $H$ is a parity-check matrix of the primal code $C$.
Its RREF is already
\begin{equation}
    H_{\mathrm{RREF}}=
    \begin{bmatrix}
    1&1&0&0&0&0&1&1&0&1\\
    0&0&1&1&0&0&1&0&1&1\\
    0&0&0&0&1&1&0&1&1&1
    \end{bmatrix}.
\end{equation}
Using zero-based coordinate indices, the pivot columns are $(0,2,4)$.
After permuting columns so that the pivot columns appear first,
\begin{equation}
    \begin{bmatrix}
    1&0&0&1&0&0&1&1&0&1\\
    0&1&0&0&1&0&1&0&1&1\\
    0&0&1&0&0&1&0&1&1&1
    \end{bmatrix}
    =[I\mid A],
\end{equation}
where
\begin{equation}
    A=
    \begin{bmatrix}
    1&0&0&1&1&0&1\\
    0&1&0&1&0&1&1\\
    0&0&1&0&1&1&1
    \end{bmatrix}.
\end{equation}
Collect the pivot coordinates of a codeword into
\begin{equation}
    c_P=\begin{bmatrix}c_0&c_2&c_4\end{bmatrix}
\end{equation}
and the remaining free coordinates into
\begin{equation}
    c_F=\begin{bmatrix}c_1&c_3&c_5&c_6&c_7&c_8&c_9\end{bmatrix}.
\end{equation}
Then every codeword of $C$ is obtained by choosing $c_F\in\Fb^7$ and setting
\begin{equation}
    c_P=-c_FA^\mathsf{T}.
\end{equation}

The dual code generated by $G^\perp$ has minimum distance $5$, so $d^\perp(C)=5$ and Theorem~\ref{thm:oa-encoder} permits $s=3$.
Choose
\begin{equation}
    J=\{0,3,4\}.
\end{equation}
The codeword coordinates $0$, $3$, and $4$ must equal $b_0$, $b_3$, and $b_4$, respectively, which gives
\begin{align}
    \begin{bmatrix}
    c_0\\c_3\\c_4
    \end{bmatrix}
    &=\begin{bmatrix}
        c_1+c_6+c_7+c_9\\
        c_3\\
        c_5+c_7+c_8+c_9
    \end{bmatrix}
    =\begin{bmatrix}
    b_0\\b_3\\b_4
    \end{bmatrix}.
\end{align}
Simply fixing $c_3=b_3$ and sampling the remaining variables independently does not in general satisfy the two other constraints.
Instead, we sample uniformly from the affine solution space.

For concreteness, take
\begin{equation}
    (b_0,b_3,b_4)=(0,1,1).
\end{equation}
After fixing $c_3=1$, define
\begin{equation}
    x=\begin{bmatrix}c_1&c_5&c_6&c_7&c_8&c_9\end{bmatrix}
\end{equation}
and
\begin{equation}
    A'=\begin{bmatrix}
    1&0&1&1&0&1\\
    0&1&0&1&1&1
    \end{bmatrix},
    \qquad
    b'=\begin{bmatrix}0&1\end{bmatrix}.
\end{equation}
The remaining constraints are
\begin{equation}
    A'x^\mathsf{T}={b'}^\mathsf{T}.
\end{equation}
The augmented matrix
\begin{equation}
\label{eq:expand}
E=
\left[
    A'
    \,\Big|\,
    {b'}^\mathsf{T}
\right]
=\begin{bmatrix}
1&0&1&1&0&1&0\\
0&1&0&1&1&1&1
\end{bmatrix}
\end{equation}
is already in RREF.
The pivot columns among the variables are $\{0,1\}$ and the free columns are $\{2,3,4,5\}$.
Writing $x_P=(x_0,x_1)$ and $x_F=(x_2,x_3,x_4,x_5)$, one samples $x_F$ uniformly and then determines $x_P$ uniquely.
For example, take
\begin{equation}
    x_F=\begin{bmatrix}0&1&1&0\end{bmatrix}.
\end{equation}
Then
\begin{equation}
\begin{bmatrix}
    x_0+1\\
    x_1
\end{bmatrix}
=
\begin{bmatrix}
    0\\
    1
\end{bmatrix},
\end{equation}
so $x_P=(1,1)$ and
\begin{equation}
    x=\begin{bmatrix}1&1&0&1&1&0\end{bmatrix}.
\end{equation}
Together with $c_3=1$, this gives
\begin{align}
    c_P
    &=-c_FA^\mathsf{T}=c_FA^\mathsf{T}\notag\\
    &=\begin{bmatrix}
        1&1&1&0&1&1&0
    \end{bmatrix}
    \begin{bmatrix}
        1&0&0\\
        0&1&0\\
        0&0&1\\
        1&1&0\\
        1&0&1\\
        0&1&1\\
        1&1&1
    \end{bmatrix}
    =\begin{bmatrix}0&0&1\end{bmatrix},
\end{align}
where $-c_FA^\mathsf{T}=c_FA^\mathsf{T}$ over $\Fb$.
The final codeword is
\begin{equation}
    c=\begin{bmatrix}0&1&0&1&1&1&0&1&1&0\end{bmatrix}.
\end{equation}
Because every assignment of the free variables determines exactly one solution, uniform sampling of the free variables gives a uniform sample from $\{c\in C:c_J=b_J\}$.

\subsection{Asymptotically good RACs}
\label{subsec:asymp-rac}

For linear codes, asymptotic goodness requires both the code rate and the relative minimum distance to remain nonvanishing as the blocklength grows.
For RACs, we instead consider the compression fraction $1-\krac/\nrac$ and the decoding advantage $\pwc-1/q$.
This motivates the following definition.

\begin{definition}[Asymptotically good RACs]
\label{def:asymptotically-good-rac}
A sequence of $q$-ary $(\nrac_i,\krac_i)$-RACs with $\nrac_i\to\infty$ is called asymptotically good if
\begin{equation}
    \liminf_{i\to\infty}
    \left(1-\frac{\krac_i}{\nrac_i}\right)>0
\end{equation}
and
\begin{equation}
    \liminf_{i\to\infty}
    \left(\pwc{}_{,i}-\frac{1}{q}\right)>0.
\end{equation}
\end{definition}

By the ANTV converse, these two conditions also imply that the RAC rate $\rrac:=\krac/\nrac$ is bounded away from zero and that the decoding success probability is bounded away from one, respectively.

For example, consider the known infinite family of $(2^{\krac}-1,\krac)_2$-RACs~\cite{kondo2026random}.
Its compression fraction satisfies $1-\frac{\krac}{\nrac}=1-\frac{\krac}{2^{\krac}-1}\to 1$, whereas its worst-case decoding success probability is $p_{\mathrm{wc}}=\frac{1}{2}+\tfrac{1}{2(2^{\krac}-1)}$.
Hence, $\pwc-\frac{1}{2}\to0$, so this family is not asymptotically good.
As another example, consider the infinite family of $(\nrac,\nrac-1)_2$-RACs~\cite{kondo2026random}.
Its compression fraction is $1-\frac{\krac}{\nrac}=\frac{1}{\nrac}\to0$, while
$\pwc=\frac{\nrac-1}{\nrac}\to1$.
Thus, this family is also not asymptotically good, because its compression fraction vanishes.
These examples illustrate the two defining requirements separately.
Moreover, the ANTV converse implies that a nonvanishing decoding advantage automatically keeps the RAC rate bounded away from zero, while a nonvanishing compression fraction automatically keeps the decoding success probability bounded away from one.

Definition~\ref{def:asymptotically-good-rac} does not prescribe the asymptotic RAC rate.
To distinguish asymptotically good families by their limiting rate, we introduce the following refinement.

\begin{definition}[Asymptotically good RACs at rate $\rrac$]
\label{def:asymptotically-good-rac-R}
Let $\rrac\in(0,1)$.
An asymptotically good sequence of $(\nu_i,\kappa_i)_\qin$-RACs is said to be asymptotically good at rate $\rrac$ if
\begin{equation}
    \lim_{i\to\infty}\frac{\kappa_i}{\nu_i}=\rrac.
\end{equation}
\end{definition}

We next extend this notion from a single prescribed rate to all rates in $(0,1)$ by defining the following property of a class of RACs.

\begin{definition}[Rate-universal asymptotic goodness]
\label{def:rate-universal}
A class $\mathcal{F}$ of $q$-ary RACs is called rate-universally asymptotically good if, for every $\rrac\in(0,1)$, $\mathcal{F}$ contains a sequence of RACs that is asymptotically good at rate $\rrac$.
\end{definition}

The next theorem states the general transfer principle from coding theory to RACs.
\begin{theorem}[Asymptotically good codes induce asymptotically good RACs]
\label{thm:good-code-to-good-rac}
Let $\{B_i\}$ be an asymptotically good family of $q$-ary linear codes
\begin{equation}
    B_i=[n_i,k_i,d_i]_q,
    \qquad n_i\to\infty.
\end{equation}
For each $i$, take the RAC codebook to be
\begin{equation}
    C_i:=B_i^\perp
\end{equation}
and use the OA encoder with $s_i=d_i-2$ for all sufficiently large $i$.
Then the resulting $(n_i,n_i-k_i)_q$-RAC family is asymptotically good.
\end{theorem}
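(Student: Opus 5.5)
The plan is to apply the finite-length transfer corollary, with $\cB=B_i$ and $d_{\mathrm{lb}}=d_i$, for all sufficiently large $i$, and then check the two liminf conditions of Definition~\ref{def:asymptotically-good-rac} separately.

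First I would justify that the OA encoder is well defined. Asymptotic goodness gives $\delta:=\liminf_i d_i/n_i>0$, and $n_i\to\infty$. Hence there is $i_0$ such that $d_i\ge(\delta/2)n_i\ge2$ for all $i\ge i_0$. For such $i$, the corollary with $\cC=C_i=B_i^\perp$ and $s_i=d_i-2$ applies, since $d(C_i^\perp)=d(B_i)=d_i$. It yields an $(n_i,n_i-k_i)_q$-RAC with
\[
\pwc{}_{,i}=\frac1q+\left(1-\frac1q\right)\frac{d_i-2}{n_i}.
\]
Finitely many initial indices are irrelevant to a liminf, so they can be discarded or assigned any RAC.

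Next come the two conditions.

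\textbf{Compression fraction.} With $\nrac_i=n_i$ and $\krac_i=n_i-k_i$, we get $1-\krac_i/\nrac_i=k_i/n_i$. Its liminf is positive by the rate condition of Definition~\ref{def:asymptotically-good-code}.

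\textbf{Decoding advantage.} We have $\pwc{}_{,i}-1/q=(1-1/q)(d_i/n_i-2/n_i)$. Since $2/n_i\to0$, its liminf equals $(1-1/q)\liminf_i d_i/n_i=(1-1/q)\delta>0$.

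There is no real obstacle here. The only point needing care is the threshold $d_i\ge2$, which the corollary requires and which forces the phrase ``for all sufficiently large $i$''. The argument above handles it by combining $n_i\to\infty$ with the positive relative distance.
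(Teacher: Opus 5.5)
Your proof is correct and follows essentially the same route as the paper: both identify the compression fraction as $k_i/n_i$, bounded below via the rate condition, and take the lower limit of $(1-1/q)(d_i-2)/n_i$ using $d(C_i^\perp)=d_i$. The paper additionally invokes the Singleton bound to show that the message rate $(n_i-k_i)/n_i$ stays bounded away from zero, which the definition does not require, while you make explicit the threshold $d_i\ge 2$ that the paper leaves implicit in ``for all sufficiently large $i$''.
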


\begin{proof}
Because $\{B_i\}$ is asymptotically good, $\delta_*:=\liminf_{i\to\infty}d_i/n_i>0$.
The Singleton bound gives $n_i-k_i\ge d_i-1$, and therefore $\liminf_{i\to\infty}(n_i-k_i)/n_i\ge \delta_*>0$.
Moreover, since $\{B_i\}$ is asymptotically good, $\limsup_{i\to\infty}(n_i-k_i)/n_i=1-\liminf_{i\to\infty}k_i/n_i<1$.
Thus the RAC message rate is bounded away from both zero and one.
Moreover, $d(C_i^\perp)=d(B_i)=d_i$.
By Theorem~\ref{thm:oa-encoder}, $p_{\mathrm{wc},i}-1/q = \left(1-1/q\right)(d_i-2)/n_i$.
Taking the lower limit gives
$\liminf_{i\to\infty}\left(p_{\mathrm{wc},i}-1/q\right)\ge\left(1-1/q\right)\delta_*>0$.
Hence the RAC family is asymptotically good.
\end{proof}

\subsection{Code Design}
\label{subsec:code-design}
Theorem~\ref{thm:good-code-to-good-rac} reduces the construction of asymptotically good RACs to that of polynomial-time constructible asymptotically good linear-code families with efficiently available distance certificates.
Thus, any explicit code family satisfying these requirements, such as Justesen-type codes, can in principle be used in our construction.
Random linear codes, for example, attain the GV bound with high probability, but computing the exact minimum distance of a general binary linear code is NP-hard~\cite{vardy1997intractability}.
A randomly generated code therefore does not by itself provide the deterministic, efficiently available distance certificate required here.

In this work, we use concatenated codes with GS-tower AG codes as outer codes.
This construction allows the inner code and the outer alphabet to remain fixed as the blocklength grows, while providing a convenient certified distance bound for finite-length analysis.
We assume that the inner code is selected from a database of known linear codes for implementability.

For a GS-tower outer code, the concatenation condition $\qout=\qin^{\kin}$ must be compatible with the requirement that $\qout$ be a square.
Thus either the inner dimension $\kin$ is even or the base alphabet $\qin$ is itself a square.
For binary RACs, $\qin=2$, so the inner dimension must be even.
We assume that the inner code is selected from a database of known linear codes.
The GS construction provides an explicit asymptotically good code family
to which Theorem~\ref{thm:good-code-to-good-rac} can be applied.

\begin{corollary}[GS-tower construction of asymptotically good RACs]
\label{cor:gs-good-rac}
Let $\cin=[\nin,\kin,\din]_\qin$ be a fixed inner code with $\kin,\din>0$, and suppose that $\qout=\qin^{\kin}>4$ is a square.
Fix $0<\rout<1-1/(\sqrt{\qout}-1)$.
For each GS-tower level $m$, choose an outer AG code of dimension $\kout_m=\lfloor \rout \nout_m\rfloor$ satisfying $\dout_m\ge \nout_m-\kout_m+1-g_m$.
Then the resulting concatenated codes $\cB_m$ form an asymptotically good family of $\qin$-ary linear codes.
Consequently, the dual codebooks $\cC_m=\cB_m^\perp$, together with the OA encoder, form an asymptotically good RACs.
\end{corollary}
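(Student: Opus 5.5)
The plan is to verify directly that the concatenated family $\{\cB_m\}$ satisfies Definition~\ref{def:asymptotically-good-code}, and then invoke Theorem~\ref{thm:good-code-to-good-rac}. The blocklength is $\ncon_m=\nin\nout_m$, and $\nout_m=\qout^{m/2}(\qout-\sqrt{\qout})\to\infty$ by~\eqref{eq:gs-length}. Since $\qout>4$ is a square, $\sqrt{\qout}\ge3$, so $1/(\sqrt{\qout}-1)\le 1/2<1$ and the admissible interval for $\rout$ is nonempty.

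\textbf{Rate.} I will first check that the outer code exists at each large level, i.e.\ that $\kout_m\le \nout_m-g_m$, which is needed for the choice $\mu_m=\kout_m+g_m-1<\nout_m$ of Section~\ref{subsec:ag-codes}. By the limit $(g_m-1)/\nout_m\to 1/(\sqrt{\qout}-1)$ and the strict inequality $\rout<1-1/(\sqrt{\qout}-1)$, this holds for all sufficiently large $m$. The concatenated dimension is $\kin\kout_m$, so the rate is
\[
\frac{\kin\lfloor\rout\nout_m\rfloor}{\nin\nout_m}\to \rin\rout>0 .
\]

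\textbf{Distance.} By~\eqref{eq:gs-concat-distance}, with $\rout_m=\lfloor\rout\nout_m\rfloor/\nout_m\to\rout$,
\[
\liminf_{m\to\infty}\deltacon_m\ \ge\ \deltain\left(1-\rout-\frac{1}{\sqrt{\qout}-1}\right)>0,
\]
which is exactly~\eqref{eq:gs-concat-asymptotic} with $\routgs=\rout$. Here $\deltain=\din/\nin>0$ because $\din>0$, and the bracket is positive by the hypothesis on $\rout$. Hence $\{\cB_m\}$ is asymptotically good.

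\textbf{Conclusion and main obstacle.} Applying Theorem~\ref{thm:good-code-to-good-rac}, which discards finitely many small $m$ for which $d_m<2$, yields an asymptotically good family of $(\nin\nout_m,\nin\nout_m-\kin\kout_m)_\qin$-RACs. For the OA encoder, only a certified lower bound is needed, namely $\din(\nout_m-\kout_m+1-g_m)$, via the finite-length corollary. This is consistent with the theorem's proof, since the lim inf argument works equally well with $d_{\mathrm{lb}}$ in place of the exact distance. The only delicate point I expect is this bookkeeping: ensuring that $\mu_m<\nout_m$ and that the certified bound is at least $2$ for large $m$. Both follow from the same strict margin between $\rout$ and the TVZ threshold.
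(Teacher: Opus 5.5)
Your proposal is correct and follows the same route as the paper. You show the concatenated family is asymptotically good using the GS-tower genus asymptotics and the standard concatenated rate and distance bounds; your explicit limits $\rin\rout$ and $\deltain(1-\rout-1/(\sqrt{\qout}-1))$ are what the paper's one-line appeal to these bounds amounts to. You then invoke Theorem~\ref{thm:good-code-to-good-rac}, and your extra bookkeeping on $\kout_m\le\nout_m-g_m$ and on the certified bound $\din(\nout_m-\kout_m+1-g_m)$ fills in details the paper leaves implicit.
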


\begin{proof}
The GS tower yields an asymptotically good family of AG codes over $\F_Q$~\cite{garcia1995tower}. Since the inner code is fixed and has positive rate and relative minimum distance, the standard concatenated-code rate and distance bounds imply that ${\cB_m}$ is asymptotically good.
The claim then follows from Theorem~\ref{thm:good-code-to-good-rac}.
\end{proof}

Throughout this paper, we refer to RACs constructed using GS-tower AG concatenated codes as base codes and equipped with the OA encoder as \emph{AG/OA-RACs}.

By contrast, the RS construction used in this paper is primarily a finite-length benchmark.
Its certified distance lower bound is often strong at moderate lengths, but the RS length constraint $\nout\le \qout$ forces the inner dimension to grow when the target concatenated length becomes very large.
Under the explicit inner-code selection procedure used here, this causes the certified relative minimum-distance guarantee to decrease toward zero.
The numerical results in Section~\ref{sec:numerics} illustrate this behavior directly.
We refer to these RACs as \emph{RS/OA-RACs}.

AG and RS concatenated codes constrain the achievable length and dimension.
To obtain a desired $[n,k]$ linear code, we combine shortening and puncturing while preserving the certified minimum-distance lower bound as much as possible.
Puncturing deletes one coordinate.
If a nonpivot column of a systematic generator matrix is punctured, an $[n,k,d]$ code becomes an $[n-1,k,\ge d-1]$ code without reducing dimension.
Shortening at a coordinate on which the code is not identically zero yields an $[n-1,k-1,\ge d]$ code.
Thus, we first shorten until the desired dimension is reached, choosing coordinates that reduce the dimension by one at each step.
If the resulting length is larger than desired, we puncture nonpivot coordinates; if it is smaller, we append zero coordinates.
If an $[\np,\kp,d']$ code is transformed into an $[\ntarget,\ktarget,d]$ code in this way, then
\begin{equation}
    \label{eq:shorten-puncture-distance}
    d\ge d'-\max\{0,\np-\ntarget-\kp+\ktarget\}.
\end{equation}

\algrenewcommand\algorithmicrequire{\textbf{Input:}}
\algrenewcommand\algorithmicensure{\textbf{Output:}}
\begin{algorithm}[t]
\caption{Adjustment of code length and dimension by shortening and puncturing. $\texttt{Shortening}(G,x)$ applies $x$ shortening operations at coordinates that reduce the dimension by one, $\texttt{Puncturing}(G,x)$ applies $x$ dimension-preserving punctures, and $\texttt{ZeroPadding}(G,x)$ appends $x$ zero coordinates.}
\label{alg:shortening-puncturing}
\begin{algorithmic}[1]
\Require $G,\ntarget,\ktarget,\np,\kp$
\Ensure Modified generator matrix
\State $G\gets\texttt{Shortening}(G,\kp-\ktarget)$
\State $n_s\gets\np-(\kp-\ktarget)$
\If{$n_s-\ntarget>0$}
    \State $G\gets\texttt{Puncturing}(G,n_s-\ntarget)$
\Else
    \State $G\gets\texttt{ZeroPadding}(G,\ntarget-n_s)$
\EndIf
\State \Return $G$
\end{algorithmic}
\end{algorithm}

\subsection{Asymptotic Performance of the AG/OA-RACs with Fixed Inner-Code Database}
\label{eq:prac-asymp-racs}
\begin{figure}[tb]
    \centering
    \includegraphics[width=.9\columnwidth]{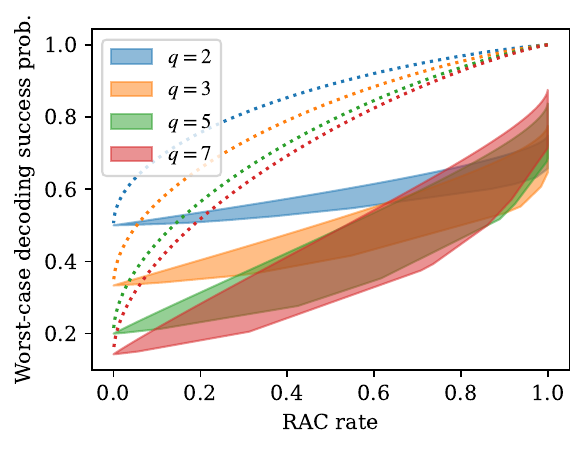}
    \caption{
    Theoretical performance envelope for the asymptotic worst-case decoding success probability of AG/OA-RACs, derived using the finite-length linear GV lower bound for the inner codes with $\nin\le 30$, the TVZ lower bound for the outer codes, and the MRRW upper bound for the concatenated codes.
    The actual asymptotic performance curve with $\nin\le30$ lies within the shaded region.
    The ANTV upper bounds are shown for reference as dashed-line.
    }
    \label{fig:potential}
\end{figure}

As discussed above, AG/OA-RACs are asymptotically good, whereas RS/OA-RACs are not when the inner codes are restricted to those in the fixed database.
In what follows, we describe the asymptotic performance of the AG/OA-RACs with finite-length inner codes from the viewpoint of practical implementability.

Let the inner code be $\cin$ with parameters $[\nin, \kin, \din]_\qin$, $\rin := \kin/\nin$, $\deltain := \din/\nin$, and let the outer code be $\cout$ with parameters $[\nout, \kout, \dout]_\qout$, $\rout := \kout/\nout$, $\deltaout := \dout/\nout$, and let $R_{\rm RAC}:=\krac/\nrac$ and $R_{\rm B}:=1-R_{\rm RAC}$ denote the $(\nrac,\krac)_\qin$-RAC rate and the rate of the concatenated base code, respectively.

Since the minimum distance of a concatenated code is at least $\dout\din$, we use this value as $d_{\mathrm{lb}}$ in \eqref{eq:col4} to evaluate the worst-case decoding success probability.
Accordingly, for fixed inner-code parameters $[\nin,\kin,\din]$, the asymptotic worst-case decoding success probability of the AG/OA-RAC is given by
\begin{align}
    \label{eq:asymptotic-certificate}
    &p_\qin(R_{\mathrm{B}};\nin,\kin,\din)
    = \frac{1}{\qin}
    + \left(1-\frac{1}{\qin}\right)\notag\\
    &\qquad\times
    \liminf_{i\to\infty}\frac{\dout_i}{\nout_i}
    \frac{\din}{\nin}.
\end{align}
Since the inner code is designed to be retrieved from a database, $\nin$ is finite.
Thus, the maximum achievable minimum distance of the inner code $\din_{\mathrm{max}}$ for specified $\nin$, $\kin$, and $\qin$ satisfies $\din_{\mathrm{max}}(\nin,\kin;\qin) \ge d_{\mathrm{GV}}(\nin,\kin;\qin)$, where $d_{\mathrm{GV}}(\nin,\kin;\qin)$ denotes the finite-length GV bound.
On the other hand, one advantage of AG concatenated codes is that their asymptotic performance can be evaluated even when the inner code is fixed.
Hence, the outer-code length can be taken to infinity, i.e., $\nout \to \infty$.
Consequently, the maximum achievable minimum distance of the outer code $\dout_{\mathrm{max}}$ for specified $\nout$, $\kout$, and $\qout$ satisfies $d^{\mathrm{O}}_{\mathrm{max}}(\nout,\kout;\qout)\ge d_{\mathrm{TVZ}}(\nout,\kout;\qout)$, where $d_{\mathrm{TVZ}}(\nout,\kout;\qout)$ denotes the TVZ bound in \eqref{eq:tvz}.
Note that $\kin$ must be even whenever $\qin$ is not a square, since $\qout=\qin^{\kin}$ must be a square.
An upper bound on the worst-case decoding success probability can also be obtained by applying the MRRW bound~\cite{mceliece1977new} to the relative minimum distance of the concatenated code.

We set $\qin\in\{2,3,5,7\}$ and $\nin\le30$.
Then, the numerical calculation is performed as a function of $\rbase$ and plotted against
$\rrac=1-\rbase$.
Figure~\ref{fig:potential} shows the obtained bounds of worst-case decoding success probability of AG/OA-RAC, together with the ANTV upper bound, $\pwc\le1-\mathcal{H}^{-1}_\qin(1-\rrac)$ where $\mathcal{H}_\qin^{-1}$ denotes the inverse of the $\qin$-ary entropy function on $[0,1-1/\qin]$.
This figure shows that, although the AG/OA-RAC does not attain the ANTV bound, its worst-case decoding success probability exceeds the random-guessing probability.
Note that these asymptotic bounds allow the concatenated-code length to vary freely.
If the concatenated-code length is instead fixed across all values of $\rbase$, the resulting constraints on the outer code generally prevent the achievable performance from reaching this asymptotic region.
We illustrate this effect in detail in Section~\ref{sec:numerics}.

\subsection{Computational Complexity}
\label{subsec:protocol-complexity}

We evaluate the theoretical computational costs of the five modules involved
in the construction of the proposed RACs.
The target is an $(\nrac,\krac)_{\qin}$-RAC.
Throughout this subsection, each arithmetic operation over a fixed finite
field is treated as having unit cost.

\emph{Parameter identification.}

Before constructing the concatenated codes, we first determine the parameters of the inner and outer codes.
In the finite-length implementation considered in this work, the inner code is selected from a fixed database $\mathcal D$.
The outer-code parameters are then chosen so that the resulting concatenated code matches the target length and dimension.
Hence, for a fixed database, the parameter-identification cost is
\begin{equation}
    T_{\mathrm{identify}}=O(1)
\end{equation}
with respect to the target blocklength over the range of
parameters supported by $\mathcal D$.

\emph{Concatenated-code construction.}
For fixed $\qout$, a generator matrix of the GS-tower AG outer code can be constructed using $\widetilde O(\nout{}^3)$ field operations by the method of \cite{shum2001low}.
On the other hand, an RS outer code can be constructed using $O(\nout\kout)$ field operations.
In addition, expanding the outer-code generator matrix over the base field and composing it with the inner code requires, using ordinary matrix multiplication, $O(\kout\nout\kin{}^2\nin)$ field operations.
Thus, the complexities of this module are
\begin{align}
    T_{\mathrm{concat}}^{\mathrm{AG}}
    &=\widetilde O(\nout{}^3)
      +O(\kout\nout\kin{}^2\nin),\\
    \label{eq:T-RS-concat}
    T_{\mathrm{concat}}^{\mathrm{RS}}
    &=O(\nout\kout+\kout\nout\kin{}^2\nin).
\end{align}
For the GS-tower AG construction, the inner code and hence $\qout=\qin^{\kin}$ are fixed while the tower level is allowed to grow.
If the outer-code rate is bounded away from zero and one, then $\nout=\Theta(\nrac)$ and $K=\Theta(\nout)$.
Consequently,
\begin{equation}
    T_{\mathrm{concat}}^{\mathrm{AG}}
    =
    \widetilde{O}(\nrac^3).
\end{equation}
For the RS construction, Eq.~\eqref{eq:T-RS-concat} gives the arithmetic complexity for each feasible finite-length parameter set.
However, an ordinary RS outer code must satisfy $\nout \le \qout = \qin^{\kin}$.
Hence, with a fixed base alphabet and a fixed inner code, the outer length $\nout$ cannot grow without bound.
More generally, if the inner-code database $\mathcal{D}$ is fixed and finite, the set of reachable RS blocklengths is also bounded.
Therefore, under the finite-length RS design considered in this work, we do not interpret Eq.~\eqref{eq:T-RS-concat} as an asymptotic complexity bound in $\nrac$.
The RS construction is used only as a finite-length benchmark.

\emph{Code-size modification.}
Let the generator matrix before shortening and puncturing have size
$\kp\times\np$.
To puncture coordinates without reducing the dimension, nonpivot
coordinates must be identified, for which we first reduce the generator
matrix to RREF.
Using ordinary Gaussian elimination, this requires
$O(\kp^2\np)$ field operations.
Once the pivot structure is known, shortening, puncturing, and zero padding
can be performed by row and column operations requiring at most
$O(\kp\np)$ additional operations.
Therefore,
\begin{equation}
    \label{eq:complexity-modify}
    T_{\mathrm{modify}}
    =O(\kp^2\np).
\end{equation}
If both $\kp$ and $\np$ grow linearly with the target RAC length, this
becomes
\begin{equation}
    T_{\mathrm{modify}}=O(\nrac^3).
\end{equation}

\emph{OA-sampler construction.}
The dominant one-time preprocessing step is reducing an $(\nrac-\krac)\times \nrac$ parity-check matrix $G^\perp=H$ of the RAC codebook to RREF.
With ordinary Gaussian elimination, this requires
\begin{equation}
    \label{eq:T-sampler}
    T_{\mathrm{sampler}}
    =
    O\!\left((\nrac-\krac)^2\nrac\right)
\end{equation}
field operations over $\Fq$.

\emph{OA sampling.}
The computational cost of OA sampling is dominated by the reduction of the augmented matrix $E$ in~\eqref{eq:expand}, which has at most $s$ rows and $\krac+1$ columns, to RREF.
This requires
\begin{equation}
    T_{\mathrm{sample}}
    =
    O\!\left(s^2\krac\right)
\end{equation}
field operations.
If the RAC rate is bounded away from zero and the relative dual distance is bounded away from zero, then $\krac=\Theta(\nrac)$ and $s=\Theta(\nrac)$.
Thus, the worst-case complexity of this module is
\begin{equation}
    \label{eq:T-sample}
    T_{\mathrm{sample}}=O(\nrac^3).
\end{equation}

Combining these estimates, the overall worst-case complexity of the AG/OA-RAC construction along the GS-tower family considered in this work is
\begin{equation}
    \widetilde{O}(\nrac^3).
\end{equation}
Therefore, the explicit AG/OA-RAC family can be constructed and implemented in polynomial time as a function of the target blocklength.

\subsection{Direct-Sum Strategy}
\label{sec:directsum}

In this section, we consider a strategy that partitions both the input information and the message into multiple blocks.
We refer to such an RAC as a direct-sum RAC (DS-RAC) and formally define it as follows.

\begin{definition}[Direct-sum RAC (DS-RAC)]
We denote a $q$-ary $(\nrac,\krac)$-RAC with worst-case decoding success probability $\pwc$ by an $(\nrac,\krac,\pwc)_q$-RAC, and let $\mathcal{R}$ denote a specific RAC protocol with these parameters.
Let
\begin{equation}
    \mathcal{D}_{\mathrm{RAC}}
    =
    \left\{
    \mathcal{R}_j :
    \mathcal{R}_j
    \text{ is a }
    (\nu_j,\kappa_j,\pwc{}_{,j})_q\text{-RAC}
    \right\}_j
\end{equation}
be a dataset of available $q$-ary RAC protocols.
A direct-sum RAC (DS-RAC) is an RAC constructed by taking the direct sum of constituent RACs selected from $\mathcal{D}_{\mathrm{RAC}}$, according to the following procedure.
\begin{itemize}
    \item \textbf{Block selection:}
    Select $L$ constituent RACs $(\nrac^{(\ell)},\krac^{(\ell)},\pwc{}^{(\ell)})_q$, $\ell=1,\ldots,L$, from $\mathcal{D}_{\mathrm{RAC}}$, allowing repetitions, such that
    \begin{equation}
        \sum_{\ell=1}^{L}\nrac^{(\ell)}=\nrac,
        \qquad
        \sum_{\ell=1}^{L}\krac^{(\ell)}=\krac.
    \end{equation}
    \item \textbf{Encoding:}
    Partition the input $b\in\mathbb{F}_q^{\nrac}$ into $L$ blocks $b^{(1)},\ldots,b^{(L)}$ of lengths $\nrac^{(1)},\ldots,\nrac^{(L)}$, respectively.
    Independently encode each block using its corresponding constituent RAC to obtain a message of length $\krac^{(\ell)}$.
    Concatenate the resulting messages and transmit the concatenated message.
    \item \textbf{Decoding:}
    Given a requested coordinate $i\in[\nrac]$, identify the block $\ell$ containing $i$ and recover the corresponding input symbol using the decoder of the $\ell$-th constituent RAC.
\end{itemize}
\end{definition}

Maximizing the worst-case decoding success probability of this protocol amounts to solving
\begin{align}
    &\max_{\substack{\{\nrac^{(\ell)},\krac^{(\ell)}\}_\ell}}
    \min_\ell p^{(\ell)}_{\mathrm{wc}}
    \notag\\
    &\text{s.t.}\quad
    (\nrac^{(\ell)},\krac^{(\ell)},p_{\mathrm{wc}}^{(\ell)})\in\mathcal{D}_{\mathrm{RAC}},\notag\\
    &\sum_\ell \nrac^{(\ell)}=\nrac,
    \qquad
    \sum_\ell \krac^{(\ell)}=\krac.
\end{align}
For a fixed dataset $\mathcal{D}_{\mathrm{RAC}}$, the above optimization problem can be reduced to a knapsack-type problem.
Thus, an optimal solution can be obtained by dynamic programming with computational complexity 
\begin{equation}
    T_{\mathrm{DS}}=O\left(\nrac\krac|\mathcal{D}_{\mathrm{RAC}}|\right).
\end{equation}
More specifically, an optimal partition can be found using the procedure described in Algorithm~\ref{alg:directsum}.

We now show that the direct-sum strategy yields rate-universally asymptotically good RACs through a simple construction.

\begin{theorem}[Rate-universal asymptotic goodness of the direct-sum construction]
\label{theorem:directsum}
The direct-sum strategy with the dataset of $q$-ary Hamming/OA-RACs and the trivial $(1,1)_q$-RAC is rate-universally asymptotically good.
\end{theorem}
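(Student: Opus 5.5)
The plan is to fix a target rate $\rrac\in(0,1)$ and build the sequence from two kinds of blocks. One kind is a single fixed Hamming/OA-RAC whose rate lies below $\rrac$. The other is the trivial $(1,1)_q$-RAC, which has rate $1$ and $\pwc=1$. Mixing them in suitable proportions gives any limiting rate in between.

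First I record the Hamming/OA-RAC parameters. The $q$-ary Hamming code $\mathcal{H}_r$ has parameters $[n_r,n_r-r,3]_q$ with $n_r=(q^r-1)/(q-1)$. Its dual is the simplex code, of dimension $r$, and $d(\mathcal{H}_r)=3$. Taking the simplex code as RAC codebook with $s=1$ in Theorem~\ref{thm:oa-encoder} therefore gives an $(n_r,r)_q$-RAC with
\begin{equation}
\pwc=\frac1q+\Bigl(1-\frac1q\Bigr)\frac{1}{n_r}>\frac1q .
\end{equation}
Its rate $r/n_r$ tends to $0$ as $r\to\infty$, so I choose $r\ge2$ with $r/n_r<\rrac$ and keep this block fixed; write $\nu_0:=n_r$ and $\kappa_0:=r$. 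The trivial $(1,1)_q$-RAC sends the input symbol itself and has $\pwc=1$.

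Next comes the combination step. For each $i$, take $a_i$ copies of the Hamming/OA-RAC and $t_i$ copies of the trivial RAC. This gives a DS-RAC with
\begin{equation}
\nu_i=a_i\nu_0+t_i,\qquad \kappa_i=a_i\kappa_0+t_i .
\end{equation}
The rate $\kappa_i/\nu_i=\rrac$ is equivalent to $t_i/a_i=(\rrac\nu_0-\kappa_0)/(1-\rrac)=:\beta$, and $\beta>0$ by the choice of $r$. I set $a_i=i$ and $t_i=\lfloor \beta i\rfloor$. Then $t_i/a_i\to\beta$, hence $\kappa_i/\nu_i\to\rrac$, and also $\nu_i\to\infty$.

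Then I verify the two goodness conditions. Each block is decoded independently, so for any coordinate in block $\ell$ the success probability is that block's worst-case value. The DS-RAC therefore has $\pwc{}_{,i}=\min_\ell \pwc^{(\ell)}$. Since the trivial block has $\pwc=1$, this minimum is
\begin{equation}
\pwc{}_{,i}=\frac1q+\Bigl(1-\frac1q\Bigr)\frac1{\nu_0},
\end{equation}
which is a constant independent of $i$, so the advantage is bounded away from zero. The compression fraction $1-\kappa_i/\nu_i$ tends to $1-\rrac>0$. Thus the sequence is asymptotically good at rate $\rrac$ in the sense of Definition~\ref{def:asymptotically-good-rac-R}. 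Since $\rrac$ was arbitrary, Definition~\ref{def:rate-universal} holds.

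The only delicate point is that $\rrac$ may be very small. The Hamming block's rate must then lie strictly below $\rrac$, which forces $\nu_0$ large and makes the constant advantage small, but it remains positive. This works because $r/n_r\to0$. I also need to check that Hamming codes exist for every $r\ge2$ over any prime power $q$; here $d^\perp=3\ge2$ holds, so Theorem~\ref{thm:oa-encoder} applies.
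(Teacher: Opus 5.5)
Your proof is correct and is essentially the paper's argument: a direct sum of two constituent RACs whose rates bracket $\rrac$, mixed in tuned proportions, with the resulting $\pwc$ equal to the minimum over blocks. The one change is that you always use the trivial $(1,1)_q$-RAC as the upper block, where the paper uses the $(m-1)$-st Hamming/OA-RAC with $r_m\le\rrac\le r_{m-1}$. This costs nothing if you take the smallest admissible $r$ (allowing $\beta\ge 0$), because the lowest-rate Hamming block is the bottleneck in both constructions. It also lets you skip the monotonicity of $r_m$ and $\pwc{}_{,m}$. Finally, your explicit choice $a_i=i$, $t_i=\lfloor\beta i\rfloor$ makes the rate convergence more concrete than the paper's ``varying $a/b$''.
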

\begin{proof}
Recall that $q$-ary Hamming codes have parameters $[n,k,d]=\left[\frac{q^m-1}{q-1},\frac{q^m-1}{q-1}-m,3\right]$ for $m\ge2$.
Hence, the parameters of the corresponding Hamming/OA-RACs are
\begin{equation}
    \nrac_m=\frac{q^m-1}{q-1},
    \quad
    \krac_m=m,
    \quad
    \pwc{}_{,m}
    =
    \frac{1}{q}
    +
    \frac{(q-1)^2}{q(q^m-1)}.
\end{equation}
Let
\begin{equation}
    r_m
    :=
    \frac{\krac_m}{\nrac_m}
    =
    \frac{m(q-1)}{q^m-1},
    \qquad m\ge2,
\end{equation}
and set $(\nrac_1,\krac_1,r_1,\pwc{}_{,1})=(1,1,1,1)$, corresponding to the trivial $(1,1)_q$-RAC.
Then $r_m$ decreases to $0$ as $m\to\infty$.
Hence, for every $\rrac\in(0,1)$, there exists $m\ge2$ such that
\begin{equation}
    r_m\le\rrac\le r_{m-1}.
\end{equation}
By taking direct sums of the $m$-th and $(m-1)$-st constituent RACs in suitable proportions, we can construct a sequence whose blocklength tends to infinity and whose rate converges to $\rrac$.
Indeed, the rate of a direct sum consisting of $a$ copies of the $m$-th RAC and $b$ copies of the $(m-1)$-st RAC is $(a\krac_m+b\krac_{m-1})/(a\nrac_m+b\nrac_{m-1})$, and these rates can be made arbitrarily close to any value in $[r_m,r_{m-1}]$ by varying $a/b$.
Since $\pwc{}_{,m}<\pwc{}_{,m-1}$ for $m\ge2$, where $\pwc{}_{,1}=1$, each such direct sum has worst-case decoding success probability $\pwc{}_{,m}$.
Choose a sequence $\{(\nrac_i,\krac_i,\pwc{}_{,i})\}_{i=1}^{\infty}$ of such direct sums so that
\begin{equation}
    \frac{\krac_i}{\nrac_i}\to\rrac
    \quad\text{and}\quad
    \pwc{}_{,i}=\pwc{}_{,m}
    \quad\text{for all } i.
\end{equation}
Since $\pwc{}_{,m}>\frac{1}{q}$, we have
\begin{equation}
    \liminf_{i\to\infty}
    \left(
        \pwc{}_{,i}-\frac{1}{q}
    \right)
    =
    \pwc{}_{,m}-\frac{1}{q}
    =
    \frac{(q-1)^2}{q(q^m-1)}
    >0.
\end{equation}
Moreover,
\begin{equation}
    \liminf_{i\to\infty}
    \left(
        1-\frac{\krac_i}{\nrac_i}
    \right)
    =
    1-\rrac
    >0.
\end{equation}
Thus, this sequence is asymptotically good at rate $\rrac$.
Since $\rrac\in(0,1)$ was arbitrary, the DS-RAC with a dataset containing all $q$-ary Hamming/OA-RACs and the trivial $(1,1)_q$-RAC is rate-universally asymptotically good.
\end{proof}

Finally, although it suffices for the dataset to contain Hamming/OA-RACs in order to construct rate-universally asymptotically good RACs via the direct-sum strategy, incorporating other RACs into the dataset can further improve the worst-case decoding success probability. We investigate this improvement in detail through numerical experiments in a later section.

\begin{algorithm}[tb]
\caption{
Pseudocode for dynamic programming to find the optimal $(\nrac, \krac)$ partition $\mathcal{P}$ and the corresponding worst-case decoding success probability $p_{\mathrm{wc}}^\oplus$ under the direct-sum strategy.
$\|$ denotes list concatenation.
}
\label{alg:directsum}
\begin{algorithmic}[1]
\Require $\nrac$, $\krac$, $q$, $\mathcal{D}_{\mathrm{RAC}}$
\Ensure $p_{\mathrm{wc}}^\oplus$, $\mathcal{P}$
\State $\mathrm{DP}_{0,0} \gets \infty$
\For{$a=0,1,\dots,\nrac$}
    \For{$b=0,1,\dots,\krac$}
        \If{$(a,b)=(0,0)$}
            \State \textbf{continue}
        \EndIf
        \State $p^* \gets$ None
        \For{$(\nrac_i,\krac_i,p_i)\in\mathcal{D}_{\mathrm{RAC}}$}
            \If{$\nrac_i\le a$ and $\krac_i\le b$ and 
                $\mathrm{DP}_{a-\nrac_i,b-\krac_i}$ is not None}
                \State $p \gets \min\{\mathrm{DP}_{a-\nrac_i,b-\krac_i},p_i\}$
                \If{$p^*$ is None or $p>p^*$}
                    \State $(p^*,i^*) \gets (p,i)$
                \EndIf
            \EndIf
        \EndFor
        \State $(\mathrm{DP}_{a,b},C_{a,b}) \gets (p^*,i^*)$
    \EndFor
\EndFor
\State $p_{\mathrm{wc}}^\oplus \gets \mathrm{DP}_{\nrac,\krac}$
\If{$p_{\mathrm{wc}}^\oplus$ is None}
    \State \Return $1/q$, $\emptyset$
\EndIf
\State $(a,b) \gets (\nrac,\krac)$
\State $\mathcal{P} \gets [\,]$
\While{$a\neq0$ or $b\neq0$}
    \State $i \gets C_{a,b}$
    \State $\mathcal{P} \gets \mathcal{P} \mathbin{\|} [\mathcal{D}_{\mathrm{RAC},i}]$
    \State $(a,b) \gets (a-\nrac_i,b-\krac_i)$
\EndWhile
\State \Return $p_{\mathrm{wc}}^\oplus$, $\mathcal{P}$
\end{algorithmic}
\end{algorithm}

\section{Numerical Experiments}
\label{sec:numerics}
We perform three numerical experiments.
First, we compare the proposed OA encoder with the optimal encoder for fixed linear RAC codebooks.
Second, we compare the certified worst-case decoding success probabilities of AG/OA-RACs and RS/OA-RACs for a fixed inner-code database and concatenated-code length.

\subsection{Performance of the OA Encoder}
\label{subsec:distance-numerics}
\newcommand{\ntest}{10}
To evaluate the encoder itself independently of concatenated-code design, we consider binary linear RACs of input length $\nrac=\ntest$.
For each dimension $\krac=1,\ldots,\ntest$, we enumerate linearly inequivalent binary linear codes of length $\ntest$ and identify the largest dual distance $d^\perp_{\max}(\ntest,\krac)$.
Because the OA encoder depends only on the dual distance, its best performance within this ensemble is obtained from a codebook attaining $d^\perp_{\max}(\ntest,\krac)$.
For comparison, we also obtain optimal encoders using the linear programming formulation of \cite{kondo2026random}.
Specifically, we determine an optimal encoder for every inequivalent linear code of length $\ntest$ and, for each code dimension, select the code that maximizes the worst-case decoding success probability.
Note, however, that this optimization is restricted to linear RACs, and the size of the resulting linear program grows exponentially with the parameters.
We also compare with the general Improved RAC construction of~\cite{liabotro2017improved}.
For $\nrac\le2^\krac-1$, its worst-case decoding success probability is
\begin{equation}
    \label{eq:improved-rac}
    \frac12+\frac{1}{2\sqrt{(2^\krac-1)\nrac}}.
\end{equation}
The known upper bound (ANTV U.B.)~\cite{ambainis1999dense} is also included as a reference.

\begin{figure}[t]
    \centering
    \includegraphics[width=.9\columnwidth]{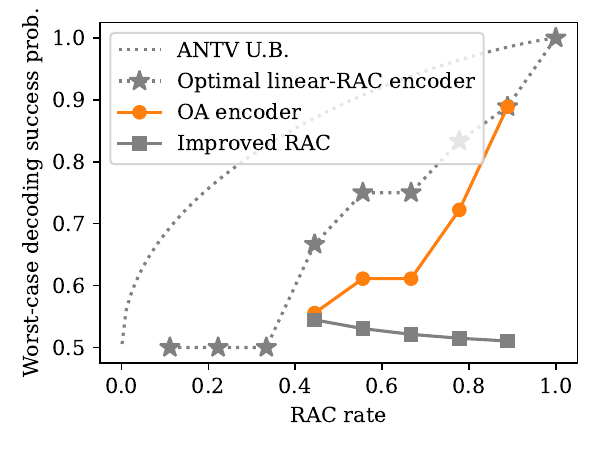}
    \caption{
    Worst-case decoding success probability for binary $(\ntest,\krac)$-RACs. The OA encoder is evaluated using codebooks with the maximum dual distance $d^\perp_{\max}$ for each $\krac$, whereas the optimal encoder is obtained by optimizing over all inequivalent $[\ntest,\krac]$ linear codes.
    }
    \label{fig:finite-rac}
\end{figure}
Figure~\ref{fig:finite-rac} shows that the OA encoder does not attain the optimal encoder performance for most rates, but it substantially improves on the Improved RAC over much of the range in which the latter is applicable.
Closing the gap between the optimal encoder and the OA-based encoder is left for future work.

\subsection{Certified Performance of Proposed RACs with Finite-Length Codes}
\label{subsec:concat-distance-numerics}
\algrenewcommand\algorithmicrequire{\textbf{Input:}}
\algrenewcommand\algorithmicensure{\textbf{Output:}}
\begin{algorithm}[t]
\caption{Parameter selection for an RS concatenated code.}
\label{alg:feasible_parameters_rs}
\begin{algorithmic}[1]
\Require $\ntarget,\ktarget,\mathcal D$
\Ensure Best parameter set $(\nout,\kout,\cin,\dconlb)$, or $\varnothing$ if no certified minimum-distance lower bound of at least $2$ is found
\State $d_{\max}\gets0$
\State $\mathcal F\gets\emptyset$
\For{$\cin\in\mathcal D$}
    \State $\cin$ is an $[\nin,\kin,\din]_\qin$ linear code
    \State $\qout\gets\qin^{\kin}$ \Comment{Concatenation constraint}
    \State $\nout\gets\lceil\ntarget/\nin\rceil$
    \If{$\nout\le\qout$} \Comment{Ordinary RS length constraint}
        \State $\kout\gets\left\lceil\nout(\ktarget/\ntarget)/(\kin/\nin)\right\rceil$
        \If{$1\le\kout\le\nout$}
            \State $\doutlb\gets\nout+1-\kout$
            \State $d_{\mathrm{loss}}\gets\max\{0,\ncon-\ntarget-\kcon+\ktarget\}$
            \State $\dconlb\gets\max\{0,\doutlb\din-d_{\mathrm{loss}}\}$
            \If{$\dconlb\ge 2 \land \dconlb>d_{\max}$}
                \State $d_{\max}\gets \dconlb$
                \State $\mathcal F\gets(\nout,\kout,\cin,\dconlb)$
            \EndIf
        \EndIf
    \EndIf
\EndFor
\If{$\mathcal F=\emptyset$}
    \State \Return $\varnothing$
\Else
    \State \Return $\mathcal F$
\EndIf
\end{algorithmic}
\end{algorithm}

\begin{algorithm}[t]
\caption{Parameter selection for a GS-tower AG concatenated code.}
\label{alg:feasible_parameters_ag}
\begin{algorithmic}[1]
\Require $\ntarget,\ktarget,\mathcal D$
\Ensure Best parameter set $(\kout,m,\cin,\dconlb)$, or $\varnothing$ if no positive guarantee is found
\State $d_{\max}\gets0$
\State $\mathcal F\gets\emptyset$
\For{$\cin\in\mathcal D$}
    \State $\cin$ is an $[\nin,\kin,\din]_\qin$ linear code
    \If{$\kin$ is even or $\qin$ is square} \Comment{Square-field constraint for the GS tower}
        \State $\qout\gets\qin^{\kin}$
        \State $m\gets\max\!\left\{0,\left\lceil2\log_{\qout}\!\left(\frac{\ntarget}{\nin(\qout-\sqrt{\qout})}\right)\right\rceil\right\}$
        \State $\nout\gets\qout^{m/2}(\qout-\sqrt{\qout})$
        \State $\kout\gets\left\lceil\nout(\ktarget/\ntarget)/(\kin/\nin)\right\rceil$
        \State $g\gets$ the value in~\eqref{eq:genus}
        \If{$1\le\kout\le\nout-g$}
            \State $\doutlb\gets\nout+1-g-\kout$
            \State $d_{\mathrm{loss}}\gets\max\{0,\ncon-\ntarget-\kcon+\ktarget\}$
            \State $\dconlb\gets\max\{0,\doutlb\din-d_{\mathrm{loss}}\}$
            \If{$\dconlb>d_{\max}$}
                \State $d_{\max}\gets \dconlb$
                \State $\mathcal F\gets(\kout,m,\cin,\dconlb)$
            \EndIf
        \EndIf
    \EndIf
\EndFor
\If{$\mathcal F=\emptyset$}
    \State \Return $\varnothing$
\Else
    \State \Return $\mathcal F$
\EndIf
\end{algorithmic}
\end{algorithm}

\newcommand{\ntargeta}{200}
\newcommand{\ntargetb}{10^{10}}

In Section~\ref{eq:prac-asymp-racs}, we derived upper and lower bounds on the asymptotic performance of AG/OA-RACs under a constraint on the inner-code length.
Here, we evaluate the finite-length performance of AG/OA-RACs using a database of linear codes with fixed concatenated code length.
For comparison, we also evaluate the performance of RS/OA-RACs.
The parameters of the outer and inner codes are selected according to Algorithms~\ref{alg:feasible_parameters_rs} and~\ref{alg:feasible_parameters_ag}.
For the inner-code database $\mathcal{D}$, we use linear codes of length $\nin\le 30$ from SageMath's~\cite{sagemath} \texttt{best\_linear\_code\_in\_guava} routine, together with their shortened and punctured variants using~\eqref{eq:shorten-puncture-distance}.
If $\mathcal F$ remains empty, none of the candidate inner/outer-code combinations considered by the corresponding algorithm yields a strictly positive certified minimum-distance lower bound after adjustment to the requested $(\nrac,\krac)$.
In this case, we set the certified relative minimum-distance guarantee to zero.
Note that in this experiment, we compute only the certified relative minimum distance of each resulting linear code and the corresponding worst-case decoding success probability of the RAC derived from this guarantee; the generator matrices themselves are not explicitly constructed.
As reference benchmarks, we also include the GV and Zyablov bounds.
The GV bound is a classical existence bound for linear codes, whereas the Zyablov bound gives an existence guarantee for concatenated codes.

\begin{figure}[tb]
    \centering
    \subfloat[]{
        \includegraphics[width=.8\columnwidth]{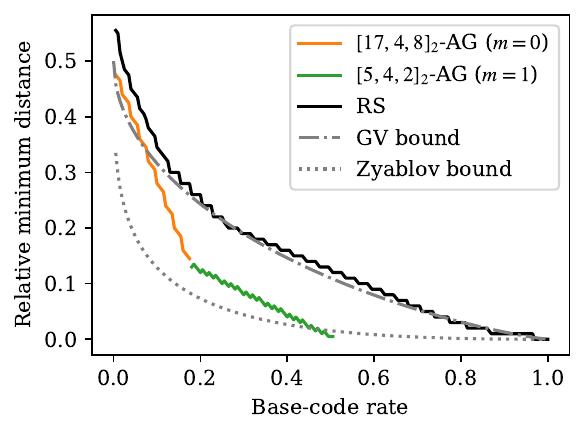}
        \label{fig:concat-a}
    }\\
    \subfloat[]{
        \includegraphics[width=.8\columnwidth]{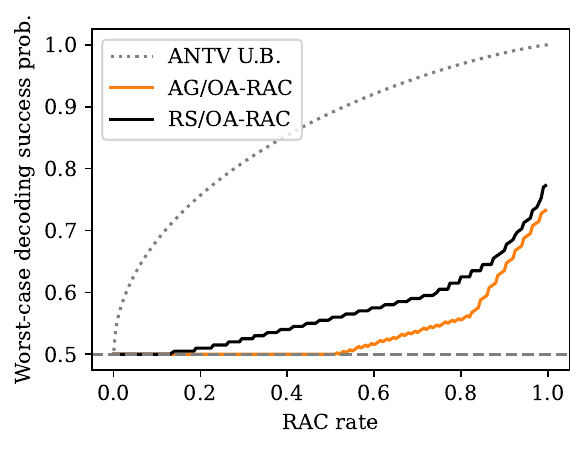}
        \label{fig:rac-a}
    }
    \caption{
    Certified relative minimum-distance lower bounds and worst-case decoding success probabilities for RS and GS-tower AG concatenated codes.
    (a) Certified relative minimum-distance lower bounds as functions of the base-code rate, together with the GV and Zyablov bounds.
    (b) Worst-case decoding success probability as a function of the RAC rate for AG/OA-RAC and RS/OA-RAC, with the asymptotic bounds shown by the shaded region.
    For the AG constructions, the legends indicate the selected inner code and GS-tower level $m$.
    For the RS construction at $n_{\mathrm{con}}=200$, the selected inner code varies across many rate intervals, so the individual inner-code parameters are omitted from the legend for readability.
    }
    \label{fig:concat}
\end{figure}

\begin{figure}[tb]
    \centering
    \includegraphics[width=.9\columnwidth]{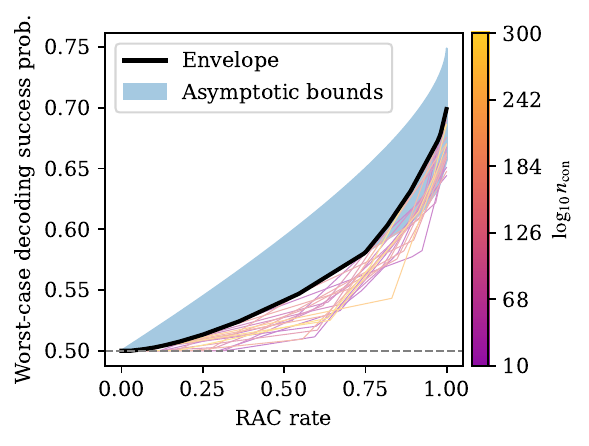}
    \caption{
    Simultaneous plot of the guaranteed worst-case decoding success probabilities for concatenated-code lengths $n_{\mathrm{con}}$ ranging up to $10^{300}$.
    The asymptotic achievable region for AG/OA-RACs derived in Section~\ref{eq:prac-asymp-racs} is also shown.
    Different line colors correspond to different concatenated-code lengths.
    }
    \label{fig:assymp-rac-simul}
\end{figure}

Figures~\ref{fig:concat-a} and \ref{fig:rac-a} show that, at $\nrac=\ntargeta$, the RS construction provides a larger certified relative minimum distance over much of the rate range and approaches the GV benchmark relatively closely.
Thus, at moderate blocklengths, RS/OA-RAC provides a better finite-length worst-case decoding success probability.
Note that the observed oscillations arise from the code-size modification and are not artifacts of the implementation.

As shown in Fig.~\ref{fig:assymp-rac-simul}, for very large concatenated-code lengths, ranging to $10^{300}$, the envelope of the finite-length curves approaches the asymptotic bounds derived in Section~\ref{eq:prac-asymp-racs}, although the individual curves do not.
This distinction arises because each individual curve is evaluated at a fixed concatenated-code length, whereas taking the envelope over different values of $\nrac$ effectively removes this constraint and thus recovers the setting underlying the asymptotic bounds.
Since the total code length is naturally fixed when evaluating a RAC at finite length, this distinction should be kept in mind when interpreting the asymptotic performance of the proposed construction.

\subsection{Practical Performance of DS-RAC}
\label{subsec:}

\begin{figure}[tb]
    \centering
    \subfloat[$\qin=2$]{
        \includegraphics[width=.8\columnwidth]{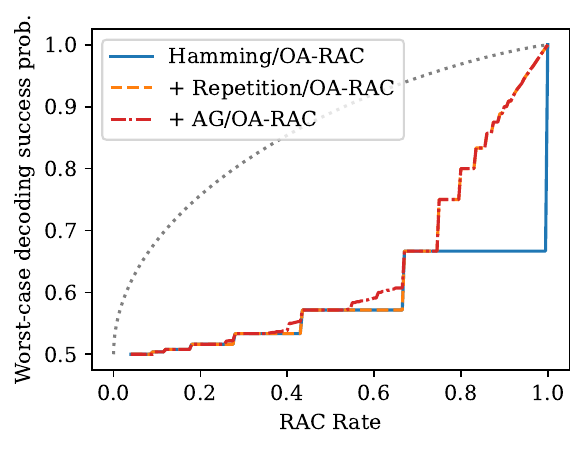}
        \label{fig:directsum-a}
    }\\
    \subfloat[$\qin=5$]{
        \includegraphics[width=.8\columnwidth]{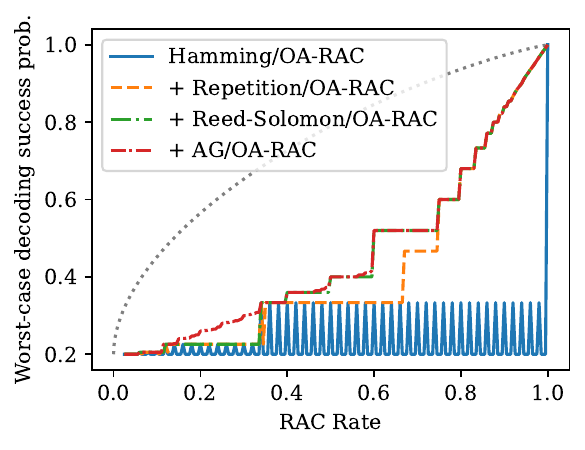}
        \label{fig:directsum-b}
    }
    \caption{
    Worst-case decoding success probability of $(200,\krac)$-RACs constructed using the direct-sum strategy for (a) $\qin=2$ and (b) $\qin=5$.
    The gray dotted curve represents the ANTV upper bound~\cite{ambainis1999dense}.
    }
    \label{fig:directsum}
\end{figure}

We numerically evaluated the worst-case decoding success probability of DS-RACs.
We fixed the input length to $\nrac=200$ and considered message lengths $\krac=1,2,\ldots,\nrac$.
The RAC dataset $\mathcal{D}_{\mathrm{RAC}}$ used in the DS-RAC consisted of the following:
\begin{itemize}
    \item RACs constructed using OA encoders with Hamming codes, repetition codes, or RS codes as the base codes, with $\nrac\in\{1,2,\ldots,200\}$ and $\krac\in\{1,2,\ldots,\nrac\}$.
    For $\qin=2$, RS codes were excluded because they do not yield nontrivial RACs.
    \item $(\nrac,\krac)_\qin$-AG/OA-RACs with $\nrac\in\{1,2,\ldots,200\}$ and $\krac\in\{1,2,\ldots,\nrac\}$ whose worst-case decoding success probability strictly exceeds $1/q$.
\end{itemize}

Figure~\ref{fig:directsum} shows that DS-RACs with Hamming/OA-RACs alone achieve a worst-case decoding success probability greater than $1/q$ over a wide range of RAC rates.
For $q=5$, the worst-case decoding success probability does not exceed $1/q$ at some rates.
This does not contradict Theorem~\ref{theorem:directsum}, since the theorem provides a guarantee only in the asymptotic limit $\nrac\to\infty$.

The figure also shows that successively incorporating RACs from other infinite families improves the worst-case decoding success probability.
For $\qin=2$, adding Repetition/OA-RACs to Hamming/OA-RACs already yields a substantial improvement at high RAC rates, whereas the additional improvement obtained by including AG/OA-RACs is relatively small around an RAC rate of $0.6$.
A similar trend is observed for $\qin=5$.
In this case, however, Reed--Solomon/OA-RACs substantially improve the worst-case decoding success probability for RAC rates of approximately $0.3$--$0.7$, while AG/OA-RACs provide a substantial improvement for rates of approximately $0.2$--$0.3$.

We note that the performance of the AG/OA-RACs was evaluated using a lower bound on the minimum distance.
Therefore, a tighter lower bound, i.e., a larger value of $d^\perp_{\mathrm{lb}}$ supplied to the OA encoder, is expected to yield an even higher worst-case decoding success probability.

\section{Conclusion}
\label{sec:conclusion}
In this work, we introduced the notion of asymptotically good RACs,
whose compression fractions and worst-case decoding advantages over random
guessing both remain bounded away from zero.

As the first approach to constructing such RACs, we introduced an OA encoder
based on the orthogonal-array property of linear codes.
Using dual codes as RAC codebooks, we showed that asymptotically good linear
codes yield asymptotically good OA-RACs.
We further gave an explicit polynomial-time realization using concatenated
codes with Garcia--Stichtenoth algebraic-geometry outer codes.

As the second approach, we proposed a direct-sum strategy that constructs large RACs, called DS-RACs, by combining smaller RACs.
We showed that, with a dataset containing all $q$-ary Hamming/OA-RACs and the trivial $(1,1)_q$-RAC, this strategy constructs asymptotically good RAC families at any prescribed rate $R\in(0,1)$.

Our numerical results demonstrated the finite-length performance of both approaches.
OA-RACs substantially outperform the Improved RAC construction~\cite{liabotro2017improved} over much of the examined range, although they do not generally attain the optimum.
For the direct-sum strategy, incorporating OA-RACs derived from additional linear-code families, including algebraic-geometry codes, further improves the worst-case decoding success probability over a wide range of RAC rates.
Developing improved encoders, constructing explicit RAC families closer to the entropy-optimal rate--success tradeoff, and reducing the sampling complexity below the present cubic-time implementation are natural directions for future work.

\section{ACKNOWLEDGMENT}
\label{sec:acknowledgement}
The authors used ChatGPT (GPT-5.6 Sol, OpenAI) for brainstorming, checking mathematical expressions and derivations in Sections \ref{sec:preliminaries}--\ref{sec:numerics}, and assistance with Python code development.
All AI-assisted content and code were independently reviewed and verified by the authors, who take full responsibility for the final manuscript.

\bibliographystyle{unsrtnat}
\bibliography{reference}

\end{document}